\pgfplotsset{compat=1.17}
\newif\ifExtendedVersion
\pgfplotsset{
SmallBarPlot/.style={
    font=\small,
    ybar,
    width=\linewidth,
    ymin=0,
    xtick=data,
    xticklabel style={text width=1.5cm, align=center}
},
SmallBarPlotHorizontal/.style={
    font=\small,
    xbar,
    width=\linewidth,
    ytick=data,
    yticklabel style={text width=1.5cm, align=center}
},
BlueBars/.style={
    fill=blue!20, bar width=0.25
},
RedBars/.style={
    fill=red!20, bar width=0.25
},
BlueBars2/.style={
    fill=blue!20, bar width=0.2
},
RedBars2/.style={
    fill=red!20, bar width=0.2
},
GreenBars/.style={
    fill=green!20, bar width=0.2
},
OrangeBars/.style={
    fill=orange!20, bar width=0.2
}
}
\newtheorem{prop}{Proposition}
\DeclareRobustCommand\legendbox[1]{(\textcolor{#1}{#1}~\begin{tikzpicture}[x=0.2cm, y=0.2cm] \draw [color=black, fill=#1!20] (0,0) -- (0,1) -- (0.6,1) -- (0.6,0) -- (0, 0); \end{tikzpicture})}
\pgfplotsset{select coords between index/.style 2 args={
    x filter/.code={
        \ifnum\coordindex<#1\fi
        \ifnum\coordindex>#2\fi
    }
}}
\newcommand{\calD}[0]{\mathcal{D}}
\newcommand{\calE}[0]{\mathcal{E}}
\newcommand{\calM}[0]{\mathcal{M}}
\newcommand{\calF}[0]{\mathcal{F}}
\newcommand{\calA}[0]{\mathcal{A}}
\definecolor{mygreen}{rgb}{0,0.6,0}
\definecolor{mygray}{rgb}{0.5,0.5,0.5}
\definecolor{mymauve}{rgb}{0.58,0,0.82}
\tiny\color{mygray}, 
\renewcommand{\vec}[1]{\ensuremath{\boldsymbol{#1}}}
\begin{document}


\title{Strategic Remote Attestation:\\Testbed for Internet-of-Things Devices and\\Stackelberg Security Game for Optimal Strategies}
\titlerunning{Strategic Remote Attestation}

    
    

    
\author{Shanto Roy\inst{1} \and
Salah Uddin Kadir\inst{1} \and
Yevgeniy Vorobeychik\inst{2} \and
Aron Laszka\inst{1}}

\institute{University of Houston, Houston, TX 
\and
Washington University in St. Louis, St. Louis, MO}

\maketitle

  \begin{center}
      Published in the proceedings of the 12th Conference on Decision and\\Game Theory for Security (GameSec 2021).
  \end{center}

\begin{abstract}
    Internet of Things (IoT) devices and applications can have significant vulnerabilities, which may be exploited by adversaries to cause considerable harm. 
    An important approach for mitigating this threat is \emph{remote attestation}, 
    which enables the defender to remotely verify the integrity of devices and their software. 
    There are a number of approaches for remote attestation, and each
    has its unique advantages and disadvantages in terms of detection accuracy and computational cost.
    Further, an attestation method may be applied in multiple ways, such as various levels of software coverage.
    Therefore, to minimize both security risks and computational overhead, defenders need to decide strategically which attestation methods to apply and how to apply them, depending on the characteristic of the devices and the potential~losses.
    
    To answer these  questions, we first develop a \emph{testbed for remote attestation of IoT devices}, which enables us to measure the detection accuracy and performance overhead of various attestation methods.
    Our testbed integrates two example IoT applications, memory-checksum based attestation, and a variety of software vulnerabilities that allow adversaries to inject arbitrary code into running applications. 
    Second, we model the problem of finding an optimal strategy for applying remote attestation as a \emph{Stackelberg security game} between a defender and an adversary. 
    We characterize the defender's optimal attestation strategy in a variety of special cases.
    Finally, building on experimental results from our testbed, we evaluate our model and show that optimal strategic attestation can lead to significantly lower losses than na\"ive baseline strategies.
\end{abstract}

\keywords{%
Remote Attestation \and Stackelberg Security Game \and Internet of Things \and Security Testbed \and Software Security}


\section{Introduction}\label{sec:intro}

With the growing number of Internet of Things (IoT) devices around the world, security has been a significant concern for researchers in the last decade. Due to more exposure in a resource-limited environment,
IoT devices often do not have access to the latest security primitives, and a number of security issues including various software vulnerabilities (e.g., stack and heap-based buffer overflows, format-string vulnerabilities)  exist due to the usage of unsafe languages like C/C++ and vulnerable functions~\cite{xu2018security,steiner2016attestation,parikh2017aslr}. Adversaries can exploit these vulnerabilities to compromise devices by altering the software code or the control flow. Therefore, from a defensive point of view, device attestation that allows an organization to verify integrity remotely is a powerful tool~\cite{steiner2016attestation}.


IoT devices are a preferable target for adversaries these days, and organizations are implementing various methods to mitigate these attacks. 
However, security measures for IoT devices are different from those for servers, since IoT devices usually have low-power resource-limited configurations and are often placed in unknown or unsafe locations. 
To detect and mitigate attacks, a defender may employ remote attestation methods to verify the integrity of a program.

While remote attestation methods can be effective at detecting compromised devices, running attestation can also incur significant computational cost, which may present a prohibitively high overhead on resource-limited IoT devices. 
There are a number of approaches for remote attestation, and each
    has its unique advantages and disadvantages in terms of detection accuracy and computational cost.
    Further, an attestation method may be applied in multiple ways, such as various levels of software coverage.
    Therefore, to minimize both security risks and computational overhead, defenders need to decide strategically which attestation methods to apply and how to apply them, depending on the characteristic of the devices and the potential~losses.

In this paper, we address these questions by (1) implementing an IoT testbed for measuring the detection accuracy and performance overhead of remote-attestation methods and by (2) introducing and solving a game-theoretic model for finding optimal remote-attestation strategies.
Specifically, we formulate and answer the following research questions.
\begin{itemize}
    \item[\textbf{Q1.}] \textbf{Testbed Development:} How to develop an IoT security testbed that can simulate software vulnerability exploitation and evaluate remote attestation?
    \item[\textbf{Q2.}] \textbf{Remote Attestation Methods:} What is the trade-off between the detection rate and computational cost of various remote attestation methods?
    \item[\textbf{Q3.}] \textbf{Optimal Attestation Strategies:} How to model the strategic conflict between a defender and an adversary, and how to find optimal attestation strategies for the defender? 
\end{itemize}

We answer the first question by describing the design and development of our security testbed for IoT device attestation (Section~\ref{sec:design}). We discuss the architecture of our testbed as well as the development of major components, such as vulnerability exploits and attestation methods. 
Our testbed enables us to experiment with software vulnerabilities and exploits and to rigorously evaluate various attestation methods in terms of computational cost and detection rate.

We answer the second question by studying the detection rate and computational cost of memory-checksum based remote attestation (Section~\ref{sec:evaluation}). 
We implement and evaluate {memory-checksum} based attestation  in our testbed for two example IoT applications. We characterize the trade-off between computational cost and detection rate, which we then use to develop the assumptions of our game-theoretic model. 

We answer the third question by developing a Stackelberg security game to model the strategic conflict between a defender and an adversary (Section~\ref{sec:model}). 
We formulate the defender's optimal remote-attestation strategy assuming an adversary who always mounts a best-response attack. 
We show how to compute an optimal strategy in various special cases, and we demonstrate through numerical examples that optimal strategies can attain significantly lower losses than na\"ive baselines. To the best of our knowledge, our model and analysis constitute the first effort to provide optimal remote-attestation strategies. 



\paragraph*{Organization}
The rest of the paper is organized as follows: Section~\ref{sec:background} provides necessary background information. 
Section~\ref{sec:design} discusses the design and development details of our IoT security testbed. 
Section~\ref{sec:model} introduces the attacker-defender model based on Stackelberg security games.
Section~\ref{sec:model_solution} provides analytical results characterizing the defender's optimal attestation strategy. 
Section~\ref{sec:evaluation} presents experimental results from our testbed as well as numerical results on the optimal attestation strategies. 
Section~\ref{sec:related_work} gives a brief overview of related work followed by our concluding remarks and future directions in Section~\ref{sec:conclusion}. 

\section{Background}\label{sec:background}
{ARM} processors are very widely used in IoT platforms. 
Therefore, we develop an ARM-based IoT security testbed to experiment with exploitation and remote attestation on ARM devices. Here, we provide a brief overview of IoT device vulnerabilities, remote attestation methods, and the Stackelberg game model.


\subsection{Software Vulnerabilities and Exploitation in IoT Devices}
Adversaries can take control of an IoT device by hijacking the code execution flow of an application and injecting arbitrary executable code into its memory space. For example, an attacker can use stack- or heap-based \emph{buffer overflow} or \emph{format string vulnerabilities} to inject malicious executable code into a process. 
By injecting executable code, the adversary can alter the functionality of an application (e.g., providing a backdoor to the adversary or causing harm directly). 
While the mitigation for these attacks may be well established in the server and desktop environment, the unique design characteristics of resource-constrained embedded devices makes it challenging to adapt the same defenses techniques. For example, many deeply embedded devices often do not support virtual memory, which is essential for address space layout randomization (ASLR). 


\subsection{IoT Remote Attestation}
Remote attestation establishes trust in a device by remotely verifying the state of the device via checking the integrity of the software running on it. 
Remote attestation methods can be divided into two main categories: hardware and software based.
Hardware-based attestation requires additional dedicated hardware (e.g., Trusted Platform Module) on the device~\cite{abera2016things}. Deploying dedicated hardware can incur additional cost in terms of hardware cost and power consumption, which are often prohibitive for inexpensive or low-power  devices. 
In contrast, software-based attestation requires a \emph{software prover} on the device, which performs specific computations (e.g.,
memory- or time-based checksum~\cite{seshadri2005pioneer,seshadri2004swatt}) and returns the result to the verifier. Note that there are also
hardware-software co-design hybrid platforms for remote attestation~\cite{nunes2019vrased}.
In this paper, we focus on software-based remote attestation.

Steiner et al.\ categorized checksum-based memory attestation in terms of evidence acquisition (software-based, hardware-based, or hybrid), integrity measurement (static or dynamic), timing (loose or strict), memory traversal (sequential or cell/block-based pseudo random), attestation routine (embedded or on-the-fly), program memory (unfilled or filled), data memory (unverified, verified, or erased), and interaction pattern (one-to-one, one-to-many, or many-to-one)~\cite{steiner2016attestation}.
%
Memory checksums can be generated based on \emph{sequential} or \emph{pseudo-random} traversal. 
In sequential traversal, each program memory cell is accessed in a sequential order. 
In contrast, in pseudo-random traversal, memory is accessed in a random cell-by-cell or block-by-block order. The effectiveness of pseudo-random traversal depends on the probability that each cell has been accessed at least~once. 



\subsection{Stackelberg Security Games}
A Stackelberg security game (SSG) is a game-theoretic model, where typically a defending player acts as the leader, and the adversarial player acts as the follower. The leader has the advantage of making the first move, while the follower has the advantage of responding strategically to the leader's move. 
Stackelberg security games have been successfully applied to finding optimal defensive strategies in a variety of settings, both in the cyber and physical domain~\cite{sinha2018stackelberg}.
For example, SSG have helped researchers and practitioners to address a security issues such as security-resource allocation at airports, biodiversity protection, randomized inspections, road safety, border patrol, and so on~\cite{trejo2016adapting,gan2018stackelberg,yin2010stackelberg,bucarey2017building}. 

Game theory can model attacker-defender interactions and characterize optimal strategies given the players' strategy spaces and objectives. 
In our game-theoretic model of remote attestation, the defender acts  as the leader by deciding how often to perform remote attestation, and the adversary acts as the follower by deciding which devices to attack. We provide detailed definitions of the environment, the player's strategy spaces, and their objectives  in Section~\ref{sec:model}.


\section{Testbed Design and Development}\label{sec:design}

In our testbed, multiple IoT applications are running on multiple IoT devices. We implement and enable various software vulnerabilities (e.g., heap-based buffer overflow) in these applications so that adversaries can remotely compromise the devices by exploiting these vulnerabilities. As a result, adversaries can modify the {code} of processes without crashing or restarting them. 
We also integrate memory checksum-based attestation method in the applications. Therefore, a verifier can remotely 
verify the integrity of the vulnerable processes.


\subsection{Testbed Components}
A typical IoT testbed consists of several IoT devices running various IoT server applications. Our testbed also includes two other types of nodes to mount attacks (e.g., code injection) against the IoT devices and to detect the attacks using remote attestation. The architecture of our testbed is presented in Figure~\ref{fig:testbed_archi}.

\begin{figure}[!ht]
    \centering
    \includegraphics[width=.8\textwidth]{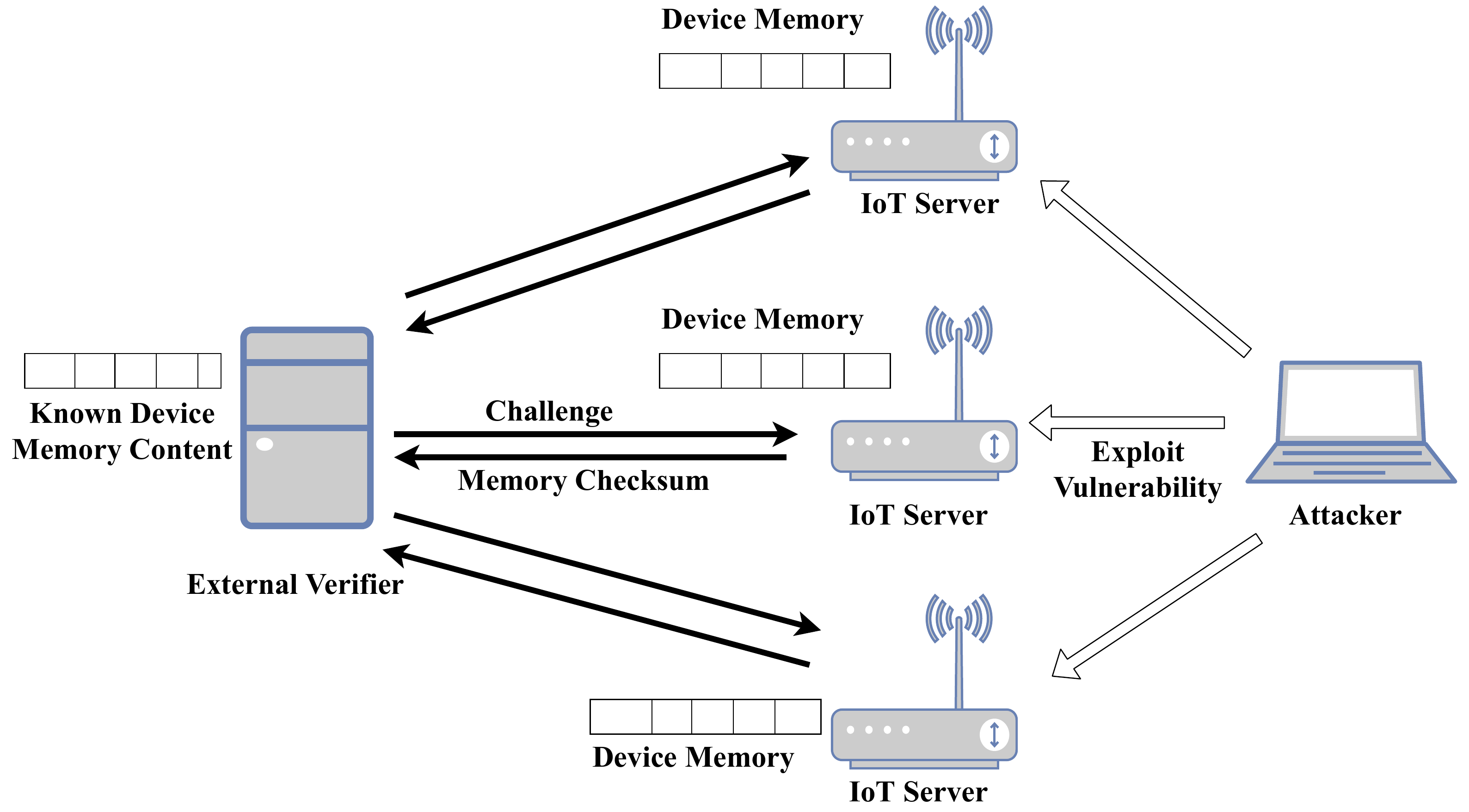}
    \caption{Remote attestation testbed architecture.}
    \label{fig:testbed_archi}
\end{figure}

\paragraph{IoT Server Node}
We set up various IoT server applications on these nodes, add vulnerable code snippets to the applications, and debug them to find exploitation opportunities that can be used to perform code-injection attacks. Then, we incorporate a memory-checksum generator that can calculate a checksum whenever the application receives a challenge from an external verifier node. 

\paragraph{Attacker Node}
The attacker is a  client node that can interact with the IoT application servers and execute various exploits (e.g., stack- or heap-based buffer overflow). The attacker's purpose is to inject or alter the software code of vulnerable applications without crashing the processes. 


\paragraph{External Verifier Node}
The verifier is responsible for performing memory-checksum based attestation of the potentially compromised application servers. For checksum-based attestation, the verifier sends a challenge along with a random seed to the potentially compromised server node and records the response in return to verify. 




\subsection{Testbed Development}
To experiment with various remote attestation strategies, we implement the following features in the testbed: start or terminate various IoT applications, exploit these applications from an attacker node, and generate challenge-response for remote attestation. 

\paragraph{Testbed Setup}
Our testbed uses five Raspberry Pi 3 Model B+ devices (two IoT application servers, an attacker, and two nodes for the verifier).
All devices run Raspbian Linux.
We incorporate two example IoT applications: 
an irrigation server\footnote{\url{https://github.com/NamedP1ayer/IrrigationServer}} and a smart home\footnote{\url{https://github.com/renair/smarthome}}.

\paragraph{Enabling Vulnerabilities}
We disable certain security features of the Linux {kernel} and the {compiler} to enable {stack- and heap}-overflow based exploitation. To enable these vulnerabilities, we disable the ASLR and stack protection; and enable code execution while compiling the applications.

\paragraph{Exploitation Simulation}
We debug all of the applications on the application server nodes to find stack- and heap-based vulnerabilities. Then, we create corresponding exploit payloads on the attacker node. The attacker node sends a request to the server, which triggers the vulnerability and thereby injects a shellcode into the process.


\paragraph{Integrity Verification Simulation}
In our testbed, we implement memory-checksum (sequential- and random-order checksum) as remote attestation strategies, which require an external trusted verifier. 
The verifier can attest a potentially compromised device by sending the same challenge to the target device and an identical isolated  device, and compare their responses.

\section{Game-Theoretic Model of Remote Attestation}\label{sec:model}


Remote attestation enables a defender to detect compromised devices remotely. However, the effectiveness and computational cost of attestation depends on strategies, such as when to attest a device and what method of attestation to employ. 
As IoT devices are resource-limited and attestation incurs computational cost, some devices should not be verified frequently (e.g., devices with low value for an adversary). 
On the other hand, some devices (e.g., ones with high value for the adversary) may need frequent attestation.

To find optimal strategies for remote attestation, we propose a game-theoretic model. Our model is a \emph{two-player, general-sum Stackelberg security game}, where the defender is the leader and the adversary is the follower (i.e., defender first selects its attestation strategy to defend its IoT devices, and then the adversary chooses which devices to attack considering the defender's attestation strategy). 
%
We assume that the defender 
chooses for each device and for each attestation method the probability of applying that method to that device;
the adversary chooses for each device whether to attack it or not.

Table~\ref{tab:game_notations} summarizes the notation of our game-theoretic model.



\subsection{Environment \& Players}


There is a set of IoT devices $\calD$ in the environment, where each individual device $\delta \in \calD$ runs various IoT applications and services. As different devices may have different software stacks, we divide the devices into disjoint classes. These device classes are denoted $\calE_1, \calE_2, \ldots, \calE_n$, where we have $i \neq j \rightarrow \calE_i \cap \calE_j = \emptyset$ and $\bigcup_i \calE_i = \calD$. Devices belonging to the same class have the same vulnerabilities and may be attacked using the same exploits. 

In our Stackelberg game model, there are two players: a defender (leader) and an attacker (follower). The defender tries to minimize the security risks of the IoT devices by detecting  compromises, while the attacker tries to compromise the devices but avoid detection.
To detect compromised devices, the defender uses various attestation methods (e.g., memory checksum, control-flow integrity).
We let $\calM$ denote the set of attestation methods, where each attestation method is an $m \in \calM$. 
If the defender detects a compromised device, the defender resets the device back to its secure state. 

\begin{table}[!ht]
    \setlength{\tabcolsep}{5pt}
    \renewcommand{\arraystretch}{1.15}
    \centering
    \caption{List of Symbols}
    \label{tab:game_notations}
    \begin{tabular}{|c|p{9.5cm}|}
        \hline
        \textbf{Symbol} & \textbf{Description} \\ \hline\hline
        \multicolumn{2}{|c|}{\textbf{Constants}} \\\hline        
        $\calD$ &  set of devices \\\hline
        $\calM$ &  set of attestation method \\\hline
        $\calE_i$ & a set of devices that share common vulnerabilities, where $\calE_i \subseteq \calD$ \\\hline
        $\mu^m$ & detection probability of attestation method $m \in \calM$ when executed on a compromised device \\\hline
        $C_D^m$ &  defender's cost to run attestation method $m \in \calM$ \\\hline
        $C_A^\delta$ &  attacker's cost to compromise device $\delta \in \calD$
        \\\hline
        $C_A^\calE $ & attacker's cost to develop an exploit for a device class $\calE \subseteq \calD$ 
        \\\hline
        $G_D^\delta, G_A^\delta$ & defender's / attacker's gain for compromised device $\delta \in \calD$ \\\hline
        $L_D^\delta, L_A^\delta$ & defender's / attacker's loss for compromised device $\delta \in \calD$ (represented as negative values) \\\hline

        \multicolumn{2}{|c|}{\textbf{Variables}}\\\hline
        $\vec p$ & defender's strategy vector \\\hline
        $\vec a$ & attacker's strategy vector  \\\hline
        $a_\delta$ & attacker's action (i.e., attack or not) against device $\delta \in \calD$ \\\hline
        $p_\delta^m$ & probability of running  attestation method $m \in \calM$ on device $\delta \in \calD$ \\\hline

        \multicolumn{2}{|c|}{\textbf{Functions}}\\\hline            
        $P_\delta(\vec{p})$ & {conditional} probability of defender detecting with strategy $\vec{p}$ that device $\delta \in \calD$ is compromised (given that it is actually compromised) \\\hline 
        $C_D^T(\vec p)$ &  defender's total cost for strategy $\vec{p}$
        \\\hline
        $C_A^T(\vec a)$ &  attacker's total cost
        for strategy $\vec{p}$ \\\hline
        $U_D(\vec p,\vec a)$ & defender's expected utility for strategy profile $(\vec{p}, \vec{a})$ \\\hline
        $U_A(\vec p,\vec a)$ & attacker's expected utility for strategy profile $(\vec{p}, \vec{a})$  \\\hline
        $U_D^\delta(p_\delta^m,a_\delta)$ & defender's expected utility from device $\delta \in \calD$  \\\hline
        $U_A^\delta(p_\delta^m,a_\delta)$ & attacker's expected utility from device $\delta \in \calD$  \\\hline
        $\calF_A(\vec p)$ & attacker's best response against defender strategy $\vec{p}$  \\\hline

        
    \end{tabular}%
\end{table}


\subsection{Strategy Spaces}

Knowing the defender's strategy (i.e., probability of attesting each device using each method), the attacker chooses which devices to attack.
We assume that the attacker follows a deterministic strategy and chooses for each device whether to attack it or not. 
Note that in an SSG, restricting the follower (i.e., the attacker) to deterministic strategies is without loss of generality.
We let the attacker's strategy be represented as a vector $\vec{a} = \langle a_\delta \rangle_{\delta \in \calD}$, 
where 
$a_\delta =1$ means attacking device $\delta \in \calD$, and
$a_\delta =0$ means not attacking device $\delta$. 
Therefore, the attacker's strategy space is
\begin{align*}
\vec{a} \in \{0, 1\}^{|\calD|} .
\end{align*}


On the other hand, the defender can choose a randomized strategy, i.e., 
for each device $\delta \in \calD$ and attestation method $m \in \calM$, the defender chooses the probability $p_\delta^m \in [0, 1]$ of running attestation method $m$ on device $\delta$. 
We let the defender's strategy be represented as a vector
$\vec p = \langle p_\delta^m \rangle_{\delta \in \calD, m \in \calM}$, where $p_\delta^m = 0$ means never running method $m \in \calM$ on device $\delta \in \calD$, and $p_\delta^m = 1$ means always running method $m$ on device $\delta$. 
Therefore, the defender's strategy space is
\begin{align*}
 \vec{p} \in [0, 1]^{|\calD \times \calM|} .
\end{align*}


\subsection{Utility Functions}
Next, we formalize the players' objectives by defining their utility functions. 

\paragraph{Defender's Utility}

Different attestation methods can have different detection rates (i.e., different probability of detecting an attack when the method is run on a compromised device).
For each attestation method  $m \in \calM$, we let $\mu^m$ denote the probability that method $m$ detects that the device is compromised.


However, the defender can run multiple attestation methods on the same device, and any one of these may detect the compromise. Therefore, the probability $P_\delta\left(\vec{p}\right)$ of detecting that device $\delta \in \calD$ is compromised when the defender uses attestation strategy $\vec{p}$ is
\begin{align}
    P_\delta\left(\vec{p}\right) = 1 - \prod_{m \in \calM} \left(1 - \mu^m \cdot p_\delta^m\right) . \label{eqn:accuracy}
\end{align}

Each attestation method also has a computational cost, which the defender incurs for running the method on a device.
For each attestation method $m \in \calM$,
we let $C_D^m$ be the cost of running method $m$ on a device. 
Then, the defender's expected total cost $C_D^T \left(\vec p\right)$ for running attestation following strategy $\vec{p}$ is
\begin{align}
    C_D^T \left(\vec p\right) &= \sum_{\delta \in \calD}\sum_{m \in \calM} C_D^m \cdot p_\delta^m . \label{eqn:defender_cost} 
\end{align}
Note that
the expected total cost of attestation $C_D^T(\vec p)$ 
depends on the probability of running attestation (higher the probability $p_\delta^m$, higher the expected cost for device $\delta$ and method $m$). 

Next, we let $G_D^\delta$ be the defender's gain when the attacker chooses to attack device $\delta \in \calD$ and the defender detects that the device is compromised. 
On the other hand, let $L_D^\delta$ be the defender's loss when  the attacker chooses to attack device $\delta$ and the defender does not detect that the device is compromised.
Then, we can express the defender's expected utility $U_D(\vec p,\vec a)$ when the defender uses attestation strategy $\vec{p}$ and the attacker uses attack strategy $\vec{a}$ as
\begin{align}
    U_D(\vec p,\vec a)
    = \sum_{\delta \in \calD}&  \left[ G^\delta_D \cdot P_\delta(\vec{p}) + L^\delta_D \cdot (1-P_\delta(\vec{p}))\right]\cdot a_\delta - C_D^T(\vec p) .  \label{eqn:defender_utility}
\end{align}

\paragraph{Attacker's Utility }
Let $C_A^\calE$ be the cost of developing an exploit for device class~$\calE$,
and let $C_A^\delta$ be the cost of attacking a particular device $\delta \in \calD$.
For any attack strategy $\vec{a}$,
the set of device classes that the adversary attacks can be expressed as $\{ \calE \,|\, \exists \, \delta \in \calE \, ( a_\delta = 1) \}$.
Then, we can express the adversary's total cost $C_A^T(\vec a)$ for attack strategy $\vec{a}$ as
\begin{align}
    C_A^T(\vec a) &= \sum_{\calE} \left ( C_A^\calE \cdot 1_{\left\{\exists \delta \in \calE ( a_\delta = 1 )\right\}} + \sum_{\delta \in \calE} C_A^\delta \cdot a_\delta \right ) . \label{eqn:attacker_cost_2}
\end{align}
Note that the attacker incurs cost for both developing an exploit for each class that it targets as well as for each individual device.

Similar to the defender, we let attacker's gain and loss for attacking a device $\delta \in \calD$ be $G_A^\delta$ and $L_A^\delta$ when the compromise is not detected and detected, respectively.
Then, we can express the adversary's expected utility $U_A(\vec p,\vec a)$ when the defender uses attestation strategy $\vec{p}$ and the attacker uses attack strategy $\vec{a}$~as
\begin{align}
    U_A(\vec p, \vec a) = \sum_{\delta \in \calD} \left[ L_A^\delta \cdot P_\delta(\vec{p}) + G_A^\delta \cdot (1-P_\delta(\vec{p})) \right] \cdot a_\delta  - C_A^T(\vec a) . \label{eqn:attacker_utility}
\end{align}
For the sake of simplicity, we assume that with respect to gains and losses from compromises, the players' utilities are zero sum, that is, 
$G_D^\delta = -L_A^\delta$ and $L_D^\delta = -G_A^\delta$.
Note that the game is not zero sum due to the players' asymmetric costs $C_D^T(\vec p)$ and $C_A^T(\vec a)$.

\subsection{Solution Concept}
We assume that both the defender and attacker aim to maximize their expected utilities.
To formulate the optimal attestation strategy for the defender, we  first define the attacker's best-response strategy.

In response to a defender's strategy $\vec{p}$, the attacker always chooses an attack strategy $\vec{a}$ that maximizes the attacker's expected utility $U_A(\vec p, \vec a)$.
Therefore, we can define the {attacker's best response} as follows.

\begin{definition}[Attacker's best response]
Against a defender strategy $\vec{p}$, 
the attacker's \emph{best-response strategy} $\calF_A(\vec p)$ is
\begin{align}
    \calF_A(\vec p) = \operatorname{argmax}_{\vec a} U_A(\vec p, \vec a) . \label{eqn:attackersoptimal}
\end{align}
\end{definition}

Note that the best response is not necessarily unique (i.e., $\calF_A$ may be a set of more than one strategies).
Hence, as is usual in the literature, we will assume tie-breaking in favor of the defender to formulate the optimal attestation strategy.

Since the defender's objective is to choose an attestation strategy $\vec{p}$ that maximizes its expected utility $U_D(\vec p, \vec a)$ anticipating that the attacker will choose a best-response strategy from $\calF(\vec{p})$,
we can define the defender's optimal strategy as follows.

\begin{definition}[Defender's optimal strategy]
%
The defender's \emph{optimal attestation strategy} $\vec{p}^*$ is
\begin{align}
 \vec{p}^* = \operatorname{argmax}_{\vec p,\, \vec a \in \calF_A(\vec p)} U_D(\vec p, \vec a) \label{eqn:defendersoptimal}
\end{align}
\end{definition}

\section{Analysis of Optimal Attestation Strategies}\label{sec:model_solution}

Here, we present analytical results on our game-theoretic model, characterizing the defender's optimal strategy in important special cases.
For ease of exposition, we present these special cases in increasing generality.
\ifExtendedVersion
We provide the proofs in Appendix~\ref{sec:proofs}.
\else
Due to lack of space, we omit proofs and details of the analysis in this document. The proofs and details are available in the extended online version~\cite{roy2021strategic}.
\fi

\subsection{Case 1: Single Device and Single Attestation Method}


First, we assume that there exists only one device $\delta$ and one attestation method~$m$.

\paragraph*{Attacker's Best-Response Strategy}
%
Whether the attacker's best response is to attack or not depends on the defender's strategy ${p}_\delta^m$.
Further, it is easy to see that if attacking is a best response for some ${p}_\delta^m$, then it must also be a best response for any $\hat{p}_\delta^m < {p}_\delta^m$.
Therefore, there must exist a threshold value $\tau_\delta$ of the defender's probability ${p}_\delta^m$ that determines the attacker's best response.

\begin{lemma} \label{lem:single_device_single_class}
The attacker's best-response strategy $\calF(\vec p)$ is
\begin{align}
    \calF(\vec p) = \begin{cases}
    \{1\} & \text{ if $p_\delta^m < \tau_\delta$} \\
    \{0, 1\} & \text{ if $p_\delta^m = \tau_\delta$} \\
    \{0\} & \text{ otherwise,}
    \end{cases} \label{eqn:final_single_device_single_attestation_condition}
\end{align}
where
\begin{equation}
    \tau_\delta = \frac{1}{\mu^m} \cdot \frac{C_A^{\calE} + C_A^\delta - G_A^\delta}{L_A^\delta - G_A^\delta} .
\end{equation}
\end{lemma}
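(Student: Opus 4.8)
The plan is to reduce the attacker's utility to a single scalar function of $p_\delta^m$ and then locate the threshold at which attacking ceases to be profitable. First I would specialize the general formulas to the case $|\calD| = 1$, $|\calM| = 1$. The product in the detection probability \eqref{eqn:accuracy} collapses to a single factor, giving $P_\delta(p_\delta^m) = \mu^m p_\delta^m$. Substituting this into the attacker's utility \eqref{eqn:attacker_utility}, and noting that with a single device there is exactly one class $\calE \ni \delta$, the attacker's cost \eqref{eqn:attacker_cost_2} is $C_A^\calE + C_A^\delta$ when $a_\delta = 1$ and $0$ when $a_\delta = 0$.

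The second step is to compare the two pure actions, which is legitimate because, as noted in the text, restricting the follower to deterministic strategies is without loss of generality. Not attacking yields $U_A(p_\delta^m, 0) = 0$, while attacking yields
\[
U_A(p_\delta^m, 1) = L_A^\delta \, \mu^m p_\delta^m + G_A^\delta \, (1 - \mu^m p_\delta^m) - C_A^\calE - C_A^\delta .
\]
Hence attacking is a best response precisely when $U_A(p_\delta^m, 1) \geq 0$, so the entire problem reduces to determining the sign of this single affine expression.

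The third step is the monotonicity and root computation that the preamble to the lemma anticipates. I would rewrite $U_A(p_\delta^m, 1) = G_A^\delta - C_A^\calE - C_A^\delta + \mu^m p_\delta^m \, (L_A^\delta - G_A^\delta)$ and observe that, since $L_A^\delta$ is a loss (encoded as a negative value) and $G_A^\delta$ a gain (positive), the slope $\mu^m(L_A^\delta - G_A^\delta)$ is strictly negative. Thus $U_A(p_\delta^m, 1)$ is strictly decreasing in $p_\delta^m$, which formalizes the claim that if attacking is optimal at some probability it remains optimal at every smaller probability. Setting $U_A(p_\delta^m, 1) = 0$ and solving for $p_\delta^m$ — dividing by the negative quantity $L_A^\delta - G_A^\delta$ — yields exactly the stated threshold $\tau_\delta = \frac{1}{\mu^m} \cdot \frac{C_A^\calE + C_A^\delta - G_A^\delta}{L_A^\delta - G_A^\delta}$.

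Finally I would assemble the three regimes: for $p_\delta^m < \tau_\delta$ we have $U_A(\cdot, 1) > 0$, so $\{1\}$ is the unique best response; for $p_\delta^m > \tau_\delta$ we have $U_A(\cdot, 1) < 0$, so $\{0\}$ is the unique best response; and at $p_\delta^m = \tau_\delta$ both actions give utility $0$, yielding the tie $\{0, 1\}$. Note that the characterization holds verbatim even when $\tau_\delta$ falls outside $[0,1]$, since the cases are stated purely as comparisons against $\tau_\delta$. The only real obstacle is bookkeeping the sign conventions: because losses are negative numbers, one must verify $L_A^\delta - G_A^\delta < 0$ to justify both the direction of monotonicity and the inequality reversal when isolating $p_\delta^m$; everything else is routine algebra.
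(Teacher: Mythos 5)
Your proof is correct and follows essentially the same route as the paper's: specialize $P_\delta(\vec p) = \mu^m p_\delta^m$, compare the attacker's utility for attacking ($U_A(p_\delta^m,1)$, including the costs $C_A^\calE + C_A^\delta$) against the zero utility of not attacking, and solve the resulting linear inequality for the threshold $\tau_\delta$, using $L_A^\delta - G_A^\delta < 0$ to flip the inequality. If anything, your explicit monotonicity argument and your handling of the tie at $p_\delta^m = \tau_\delta$ (both actions yield utility $0$, hence $\{0,1\}$) are cleaner than the paper's own remark on that boundary case.
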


In other words, it is a best response for the attacker to attack if the defender's attestation probability $p_\delta^m$ is lower than the threshold $\tau_\delta$; and it is a best response not to attack if the probability $p_\delta^m$ is higher than the threshold $\tau_\delta$.





\paragraph*{Defender's Optimal Strategy}
The defender may pursue one of two approaches for maximizing its own expected utility: selecting an attestation probability that is high enough to deter the attacker from attacking (i.e., to eliminate losses by ensuring that not attacking is a best response for the attacker); or selecting an attestation probability that strikes a balance between risk and cost, accepting that the adversary might~attack.

First, from Equations~\eqref{eqn:defender_cost} and~\eqref{eqn:final_single_device_single_attestation_condition}, it is clear that the lowest-cost strategy for deterring the attacker is $p_\delta^m = \tau_\delta$.
Second, if the defender does not deter the attacker, then it must choose a probability $p_\delta^m$ from the range $\left[0, \tau_\delta\right]$ that maximizes $U_D(p_\delta^m, 1)$.
Then, it follows from Equation~\eqref{eqn:defender_utility} that the optimal probability is either $p_\delta^m = 0$ or  $\tau_\delta$, depending on the constants $\mu^m, C_D^m, G^\delta_D,$ and  $L^\delta_D$.

\begin{proposition}\label{prop:single_device_single_class}
The defender's optimal attestation strategy ${p^*}_\delta^m$~is 
\begin{align}
    {p^*}_\delta^m
    = \begin{cases} 0 & \text{if } C_D^m \geq (G^\delta_D  - L^\delta_D)\cdot \mu^m \text{ and } \tau_\delta \geq \frac{L_D^\delta}{- C_D^m} \\
    \tau_\delta & otherwise. \end{cases}
\end{align}
\end{proposition}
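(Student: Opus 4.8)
The plan is to reduce the defender's problem to maximizing an affine function of the single scalar $p_\delta^m$ over the two regimes identified by Lemma~\ref{lem:single_device_single_class}, and then to read off the two cases from the sign of a slope together with one threshold comparison. First I would specialize the utility. With only one method, Equation~\eqref{eqn:accuracy} collapses to $P_\delta(\vec p) = \mu^m p_\delta^m$ and Equation~\eqref{eqn:defender_cost} to $C_D^T(\vec p) = C_D^m p_\delta^m$, so Equation~\eqref{eqn:defender_utility} gives, when the attacker attacks,
\begin{align*}
U_D(p_\delta^m, 1) = L_D^\delta + \big[(G_D^\delta - L_D^\delta)\mu^m - C_D^m\big]\, p_\delta^m ,
\end{align*}
and, when it does not, $U_D(p_\delta^m, 0) = -C_D^m\, p_\delta^m$. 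Both are affine in $p_\delta^m$: the attack utility has slope $(G_D^\delta - L_D^\delta)\mu^m - C_D^m$, and the no-attack utility is strictly decreasing since $C_D^m > 0$.

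Next I would use Lemma~\ref{lem:single_device_single_class} to split $[0,1]$ at the threshold $\tau_\delta$. On $(\tau_\delta, 1]$ the attacker does not attack, so the defender earns $-C_D^m p_\delta^m$, which is decreasing; its supremum is approached as $p_\delta^m \to \tau_\delta^{+}$ and is in fact attained at $p_\delta^m = \tau_\delta$, because the tie-breaking rule lets the defender select the no-attack best response there. Hence the best deterrence strategy is $p_\delta^m = \tau_\delta$ with value $-C_D^m \tau_\delta$, and no point with $p_\delta^m > \tau_\delta$ need be considered. On the complementary region $[0, \tau_\delta]$ the attacker attacks, and since $U_D(\cdot, 1)$ is affine its maximum lies at an endpoint: at $p_\delta^m = 0$ (value $L_D^\delta$) when the slope is nonpositive, i.e.\ exactly when $C_D^m \ge (G_D^\delta - L_D^\delta)\mu^m$, and at $p_\delta^m = \tau_\delta$ otherwise.

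It then remains to compare the best deterrence value against the best non-deterrence value. When the slope is positive, the non-deterrence optimum already sits at $p_\delta^m = \tau_\delta$, so both candidates coincide at $\tau_\delta$ and the optimum is $\tau_\delta$, matching the ``otherwise'' branch. When the slope is nonpositive, the relevant comparison is $L_D^\delta$ (attack at $p_\delta^m = 0$) against $-C_D^m \tau_\delta$ (deterrence at $p_\delta^m = \tau_\delta$); since $-C_D^m < 0$, the inequality $L_D^\delta \ge -C_D^m \tau_\delta$ rearranges to $\tau_\delta \ge L_D^\delta/(-C_D^m)$, which is precisely the second condition. Thus $p_\delta^m = 0$ is optimal exactly when both $C_D^m \ge (G_D^\delta - L_D^\delta)\mu^m$ and $\tau_\delta \ge L_D^\delta/(-C_D^m)$ hold, and $p_\delta^m = \tau_\delta$ is optimal in every remaining case, as claimed.

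The main obstacle I anticipate is bookkeeping rather than analysis: correctly handling the half-open deterrence region and the tie-break at $p_\delta^m = \tau_\delta$, so that the value $-C_D^m \tau_\delta$ is genuinely attainable and need not be treated as a mere supremum; and tracking signs carefully, given that $L_D^\delta < 0$ and $C_D^m > 0$ enter with definite signs, so that the division producing the second condition flips the inequality in the right direction. Once the affine structure and the threshold from Lemma~\ref{lem:single_device_single_class} are in place, the optimization itself is immediate.
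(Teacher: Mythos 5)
Your proposal is correct and follows essentially the same route as the paper's proof: specialize the utilities to affine functions of $p_\delta^m$, split $[0,1]$ at the threshold $\tau_\delta$ from Lemma~\ref{lem:single_device_single_class} into non-deterrence and deterrence regimes, reduce each to an endpoint evaluation ($L_D^\delta$ at $0$ versus $\tau_\delta$, and $-C_D^m\tau_\delta$ at $\tau_\delta$), and compare, with the sign-flipping division by $-C_D^m$ yielding the condition $\tau_\delta \geq L_D^\delta/(-C_D^m)$. Your explicit handling of the tie-break at $p_\delta^m=\tau_\delta$ is, if anything, slightly more careful than the paper's treatment of the half-open attack region.
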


Note that the first case corresponds to when deterrence is not necessarily better than non-deterrence (first condition), and  for non-deterrence strategies, minimizing risks over costs is not better (second condition).



\subsection{Case 2: Multiple Devices and Single Device Class}


Next, we generalize our analysis by allowing multiple devices $\calD$, but assuming a single device class $\calE = \calD$ and single attestation method $m$.

\paragraph*{Attacker's Best-Response Strategy}
First, the attacker needs to decide whether it will attack at all: if the attacker does not attack at all, it attains $U_A(\vec{p}, \vec{0}) = 0$ utility; if the attacker does attack some devices, it incurs the cost $C_A^{\mathcal{E}}$ of attacking the class once, and it will need to make decisions for each individual device $\delta \in \calD$ {without considering} this cost $C_A^{\mathcal{E}}$.
The latter is very similar to \emph{Case 1}  since for each individual device $\delta$, the decision must be based on a threshold value $\tau_\delta$ of the attestation probability $p_\delta^m$; however, this threshold must now ignore $C_A^{\mathcal{E}}$.

\begin{lemma}
\label{lem:multi_dev_sing_class}
The attacker's best-response strategy $\calF(\vec{p})$ is
\begin{equation}
\calF(\vec{p}) = \begin{cases} \left\{\vec{a}^*\right\} & \text{ if } U_A\left(\vec{p}, \vec{a}^*\right) > 0 \\
\left\{\vec{a}^*, \vec{0} \right\} & \text{ if } U_A\left(\vec{p}, \vec{a}^*\right) = 0 \\
\left\{\vec{0} \right\} & \text{ otherwise,}
\end{cases}
\end{equation}
where
\begin{equation}
{a}^*_\delta = \begin{cases} 1 & \text{ if } p_\delta^m < \overline{\tau}_\delta \\
0 & \text{ otherwise,}
\end{cases} \label{eq:bestrespifattack}
\end{equation}
and
\begin{equation}
  \overline{\tau}_\delta =  \frac{1}{\mu^m} \cdot \frac{C_A^\delta - G_A^\delta}{L_A^\delta - G_A^\delta} .
\end{equation}
\end{lemma}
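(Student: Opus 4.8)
The plan is to exploit the separable structure of the attacker's utility and reduce the multi-device decision to independent per-device thresholds plus a single global attack-or-not comparison. First I would specialize the detection probability to the single-method case, where Equation~\eqref{eqn:accuracy} gives $P_\delta(\vec{p}) = \mu^m p_\delta^m$, and specialize the cost Equation~\eqref{eqn:attacker_cost_2} to the single class $\calE = \calD$, so that the class-development cost $C_A^\calE$ is incurred exactly once whenever any device is attacked. Substituting into Equation~\eqref{eqn:attacker_utility} I would rewrite the utility as
\begin{equation*}
    U_A(\vec{p}, \vec{a}) = \sum_{\delta \in \calD} \underbrace{\left[ \big(G_A^\delta + \mu^m p_\delta^m (L_A^\delta - G_A^\delta)\big) - C_A^\delta \right] a_\delta}_{=:\, s_\delta(a_\delta)} - C_A^\calE \cdot 1_{\{\vec{a} \neq \vec{0}\}},
\end{equation*}
which isolates a sum of independent per-device terms $s_\delta$ from the shared class cost.

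The key observation is that, \emph{conditioned on the attacker attacking at least one device}, the term $C_A^\calE$ is a sunk constant that does not depend on which devices are chosen; hence maximizing $U_A$ over all nonzero $\vec{a}$ is equivalent to maximizing the separable sum $\sum_\delta s_\delta(a_\delta)$ term by term. Each term $s_\delta(1) = G_A^\delta - C_A^\delta + \mu^m p_\delta^m (L_A^\delta - G_A^\delta)$ is affine and, since $L_A^\delta < 0 < G_A^\delta$ forces $L_A^\delta - G_A^\delta < 0$, strictly decreasing in $p_\delta^m$; setting $s_\delta(1) = 0$ and solving yields precisely the threshold $\overline{\tau}_\delta$. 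Therefore $s_\delta(1) > 0$ iff $p_\delta^m < \overline{\tau}_\delta$, so attacking $\delta$ strictly helps exactly when $p_\delta^m < \overline{\tau}_\delta$ and strictly hurts when $p_\delta^m > \overline{\tau}_\delta$, which is exactly the device-wise rule defining $\vec{a}^*$ in Equation~\eqref{eq:bestrespifattack}. This shows $\vec{a}^*$ maximizes the separable sum and, when $\vec{a}^* \neq \vec{0}$, is the best attack strategy among all nonzero strategies.

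It then remains to decide whether to attack at all. Since $U_A(\vec{p}, \vec{0}) = 0$, I would compare the best nonzero option $\vec{a}^*$ against $\vec{0}$: the attacker strictly prefers to attack iff $U_A(\vec{p}, \vec{a}^*) > 0$, is indifferent iff $U_A(\vec{p}, \vec{a}^*) = 0$, and strictly prefers not to attack iff $U_A(\vec{p}, \vec{a}^*) < 0$, yielding the three cases of $\calF(\vec{p})$. For the tie-breaking, I would note that at $p_\delta^m = \overline{\tau}_\delta$ we have $s_\delta(1) = 0$, so whether that device is included changes nothing in the attacker's utility; defining $\vec{a}^*$ to exclude it is consistent with breaking ties in the defender's favor.

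The main obstacle I anticipate is the careful bookkeeping around the shared class cost and the degenerate case $\vec{a}^* = \vec{0}$. The subtlety is that $\vec{a}^*$ is specified purely device-by-device through the thresholds $\overline{\tau}_\delta$, yet the decision to attack at all can only be made after summing the per-device contributions and subtracting $C_A^\calE$; I must verify that the device-wise optimum really is the global optimum among nonzero strategies (which holds because, once committed to attacking, $C_A^\calE$ is a fixed offset), and I must check that when no device satisfies $p_\delta^m < \overline{\tau}_\delta$ the formula still collapses correctly: there $\vec{a}^* = \vec{0}$ with $U_A(\vec{p}, \vec{a}^*) = 0$, placing us in the middle case whose set $\{\vec{a}^*, \vec{0}\}$ degenerates to $\{\vec{0}\}$, as desired.
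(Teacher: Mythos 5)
Your proof is correct and takes essentially the same route as the paper's: you isolate the one-time class cost $C_A^\calE$, optimize device-by-device using the threshold $\overline{\tau}_\delta$ (which omits $C_A^\calE$ precisely because it is sunk once the attacker commits to attacking), and then compare the best nonzero strategy $\vec{a}^*$ against $\vec{0}$ to get the three cases. If anything, your explicit separability argument and your treatment of the degenerate case $\vec{a}^* = \vec{0}$ are cleaner than the paper's appendix, which obtains the threshold rule by informal analogy with the single-device case.
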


Note that strategy $\vec{a}^*$ is a utility-maximizing strategy for the attacker assuming that it has already paid the cost $C_A^{\mathcal{E}}$ for attacking the class.
Hence, the decision between attacking (in which case $\vec{a}^*$ is optimal) and not attacking at all ($\vec{a} = \vec{0}$) can is based on the utility $U_A(\vec{p}, \vec{a}^*)$ obtained from strategy $\vec{a}^*$ and the utility $U_A(\vec{p}, \vec{0}) = 0$ obtained from not attacking at all.





\paragraph*{Defender's Optimal Strategy}
Again, the defender must choose between deterrence and acceptance (i.e., deterring the adversary from attacking or accepting that the adversary might attack).
However, in contrast to \emph{Case 1}, the defender now has the choice between completely deterring the adversary from attacking (i.e., adversary is not willing to incur cost $C_A^\calE$ and hence attacks no devices at all) and  deterring the adversary only from attacking some devices (i.e., adversary incurs cost $C_A^\calE$ and attacks some devices, but it is deterred from attacking other devices). 

\begin{proposition}
\label{prop:multi_dev_sing_class}
The defender's optimal attestation strategy $\vec{p}^*$ is
\begin{equation}
\vec{p}^* = \begin{cases}
\left\{ \vec{p}^\text{ND} \right\} & \text{ if } U_D\left(\vec{p}^\text{ND}, \vec{a}^*\right) > U_D\left(\vec{p}^\text{D}, \vec{0}\right) \\
\left\{ \vec{p}^\text{ND}, \vec{p}^\text{D} \right\} & \text{ if } U_D\left(\vec{p}^\text{ND}, \vec{a}^*\right) = U_D\left(\vec{p}^\text{D}, \vec{0}\right) \\
\left\{ \vec{p}^\text{D} \right\} & \text{ otherwise,}
\end{cases}    
\end{equation}
where
\begin{align}
    \left(p^\text{ND}\right)_\delta^m
    = \begin{cases} 0 & \text{if } C_D^m \geq (G^\delta_D  - L^\delta_D)\cdot \mu^m \text{ and } \overline{\tau}_\delta \geq \frac{L_D^\delta}{- C_D^m} \\
    \overline{\tau}_\delta & otherwise, \end{cases}
\end{align}
$\vec{a}^*$ is as defined in Equation~\eqref{eq:bestrespifattack} with $\vec{p} = \vec{p}^\text{ND}$,
and 
\begin{align}
\label{eq:det_opt}
   {\vec{p}^\text{D}} = \operatorname{argmin}_{\left\{\vec p:\, U_A(\vec p, \vec 1) \leq 0 \, \wedge \, \forall \delta \left( p_\delta^m \in [0,\tau_\delta] \right) \right\}}  \sum_{\delta \in \calD} C_D^m \cdot p_\delta^m . 
\end{align}
\end{proposition}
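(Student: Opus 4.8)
The plan is to use Lemma~\ref{lem:multi_dev_sing_class} to split the defender's optimization~\eqref{eqn:defendersoptimal} into two independent sub-problems according to the attacker's best response, and then to solve each separately. Concretely, I would partition the defender's strategies into the \emph{deterrence} set $\{\vec p : U_A(\vec p, \vec a^*) \le 0\}$, on which the attacker's best response includes $\vec 0$ (and, by the tie-breaking convention in favor of the defender, yields the outcome $\vec a = \vec 0$), and the \emph{non-deterrence} set $\{\vec p : U_A(\vec p, \vec a^*) > 0\}$, on which the attacker plays $\vec a^*$. Since the defender's achievable utility is $-C_D^T(\vec p)$ on the first set and $U_D(\vec p, \vec a^*)$ on the second, the global optimum is the better of $\max_{\vec p} U_D(\vec p, \vec a^*)$ over non-deterring strategies and $\max_{\vec p} [-C_D^T(\vec p)]$ over deterring ones. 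Showing that these two maxima are attained at $\vec p^\text{ND}$ and $\vec p^\text{D}$, respectively, and comparing their values then yields the three-way case split, with equality contributing both optima per the tie-breaking convention.

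For the non-deterrence sub-problem, I would exploit that with a single attestation method $P_\delta(\vec p) = \mu^m p_\delta^m$ by~\eqref{eqn:accuracy}, so that both $U_D(\vec p, \vec a^*)$ in~\eqref{eqn:defender_utility} and the attacker's per-device thresholds decouple across devices. Having already committed to paying $C_A^\calE$, the attacker attacks device $\delta$ exactly when $p_\delta^m < \overline\tau_\delta$, so the defender optimizes each $p_\delta^m$ independently, reusing the argument of Proposition~\ref{prop:single_device_single_class} with $\overline\tau_\delta$ in place of $\tau_\delta$. Each device's contribution is piecewise linear in $p_\delta^m$: on $[0, \overline\tau_\delta)$ it has slope $\mu^m(G_D^\delta - L_D^\delta) - C_D^m$ and starts at $L_D^\delta$, while on $[\overline\tau_\delta, 1]$ the device is no longer attacked and the contribution is the decreasing $-C_D^m p_\delta^m$. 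Hence the per-device maximizer is an endpoint, either $0$ or $\overline\tau_\delta$; the two displayed conditions defining $\vec p^\text{ND}$ are exactly (i) the attacked-regime slope being nonpositive and (ii) the value $L_D^\delta$ at $p_\delta^m = 0$ dominating the deterrence value $-C_D^m \overline\tau_\delta$.

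For the deterrence sub-problem, maximizing $-C_D^T(\vec p)$ is minimizing the total attestation cost $\sum_\delta C_D^m p_\delta^m$ subject to completely deterring the attacker, i.e.\ $U_A(\vec p, \vec a^*) \le 0$. The crux is to rewrite this feasible set as in~\eqref{eq:det_opt}. I would first argue that a cost-minimizing deterring strategy never attests a device strictly above its per-device attack threshold $\overline\tau_\delta$: such over-attestation adds cost yet leaves the attacker's best-response attack set (and hence $U_A(\vec p, \vec a^*)$) unchanged, so it can be pared back without losing deterrence. Restricting to this region, every device carries nonnegative attacker value, so the attacker's best attack coincides with attacking all devices and $U_A(\vec p, \vec a^*) = U_A(\vec p, \vec 1)$ by~\eqref{eqn:attacker_utility}; the deterrence constraint therefore collapses to $U_A(\vec p, \vec 1) \le 0$, which together with the box constraints $p_\delta^m \in [0, \tau_\delta]$ reproduces~\eqref{eq:det_opt}.

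I expect the deterrence reduction to be the main obstacle. The delicate point is precisely the equivalence between deterring the attack-all strategy $\vec 1$ and deterring the attacker's actual best response $\vec a^*$: one must rule out over-attestation so that the constraint $U_A(\vec p, \vec 1) \le 0$ is neither slack nor spurious relative to genuine deterrence. A secondary subtlety is the open/closed mismatch of the two regions — the non-deterrence supremum may only be approached in the limit as some $p_\delta^m \uparrow \overline\tau_\delta$ — which is reconciled by the direct comparison of $U_D(\vec p^\text{ND}, \vec a^*)$ with $U_D(\vec p^\text{D}, \vec 0)$ together with the tie-breaking convention. By contrast, once Proposition~\ref{prop:single_device_single_class} is available, the non-deterrence decomposition is routine.
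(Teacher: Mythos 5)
Your proposal is correct and follows essentially the same route as the paper's proof: split the defender's problem into non-deterrence (per-device decomposition reusing the Case~1 endpoint argument with $\overline{\tau}_\delta$) and complete deterrence (cost minimization subject to $U_A(\vec p, \vec 1) \leq 0$ with box constraints $p_\delta^m \in [0,\overline{\tau}_\delta]$), then compare the two utilities with tie-breaking. If anything, your treatment is more careful than the paper's, which asserts without elaboration that restricting to the box region makes $\vec a = \vec 1$ the attacker's optimal attack, whereas you explicitly justify paring back over-attestation and the open/closed boundary issue at $\overline{\tau}_\delta$.
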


Note that $\vec{p}^\text{ND}$ is the optimal attestation strategy if  the defender does not completely deter the adversary from attacking, calculated similarly to \emph{Case 1}; $\vec{p}^\text{D}$ is the optimal attestation strategy if the defender completely deters the adversary from attacking, which may be computed by solving a simple linear optimization (Equation~\ref{eq:det_opt}).

\subsection{Case 3: Multiple Devices and Multiple Device Classes}

Next, we consider multiple device classes $\calE_1, \calE_2, \ldots, \calE_n$.
We generalize our previous results by observing that both the attacker's and defender's decisions for each class of devices are independent of other classes.


\begin{lemma} \label{lem:multiple_device_multiple_class}
For each device class $\mathcal{E}_i$, let $\vec{a}_i$ be a best response as given by Lemma~\ref{lem:multi_dev_sing_class}.
Then, $\langle \vec{a}_1, \vec{a}_2, \ldots, \vec{a}_n\rangle$ is a best-response attack strategy.
\end{lemma}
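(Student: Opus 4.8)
The plan is to exploit the additive separability of the attacker's objective across the disjoint device classes. First I would observe that, because the classes partition the device set ($\bigcup_i \calE_i = \calD$ and $\calE_i \cap \calE_j = \emptyset$ for $i \neq j$), the outer summation in the attacker's utility $U_A$ (Equation~\eqref{eqn:attacker_utility}) splits as $\sum_{\delta \in \calD} = \sum_{i=1}^n \sum_{\delta \in \calE_i}$. The benefit term contributed by each device thus depends only on the attacker's action on that device and the defender's probabilities on that device, so this portion of $U_A$ is already a sum of per-class contributions, each involving only the sub-vector $\vec{a}_i = \langle a_\delta \rangle_{\delta \in \calE_i}$.

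Next I would show that the cost $C_A^T$ (Equation~\eqref{eqn:attacker_cost_2}) decomposes the same way. It is already written as a sum over classes: the per-device term $C_A^\delta \cdot a_\delta$ obviously belongs to a single class, and the exploit-development term $C_A^\calE \cdot 1_{\{\exists \delta \in \calE (a_\delta = 1)\}}$ for class $\calE_i$ depends only on $\vec{a}_i$, since disjointness guarantees that no device outside $\calE_i$ can trigger this indicator. Hence $C_A^T(\vec a) = \sum_{i=1}^n C_A^{T}(\vec{a}_i)$, where each summand involves only $\vec{a}_i$.

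Combining the two decompositions gives $U_A(\vec p, \vec a) = \sum_{i=1}^n U_A^{(i)}(\vec p, \vec{a}_i)$, where $U_A^{(i)}$ is exactly the single-class attacker utility analyzed in \emph{Case~2} for class $\calE_i$, and the $i$-th summand depends on the attacker's choice only through $\vec{a}_i$. The attacker's feasible set factorizes as $\{0,1\}^{|\calD|} = \prod_{i=1}^n \{0,1\}^{|\calE_i|}$, with no constraints coupling the blocks. Since maximizing a block-separable sum over independent variable blocks reduces to maximizing each block independently, I conclude that $\operatorname{argmax}_{\vec a} U_A(\vec p, \vec a)$ is the Cartesian product of the per-class maximizers $\operatorname{argmax}_{\vec{a}_i} U_A^{(i)}(\vec p, \vec{a}_i)$. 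By Lemma~\ref{lem:multi_dev_sing_class}, each per-class maximizer is a best response $\vec{a}_i$ for class $\calE_i$; therefore the concatenation $\langle \vec{a}_1, \ldots, \vec{a}_n\rangle$ is a global best response, as claimed.

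The one step requiring genuine care is the cost decomposition: I must verify that the exploit-development indicator for a class is unaffected by actions on other classes, which is precisely where the disjointness hypothesis is essential. If the classes could overlap, a single attacked device could contribute to two development-cost terms, and the objective would no longer be separable, breaking the argument. Everything else is the routine fact that unconstrained maximization of a block-separable function factorizes over its variable blocks.
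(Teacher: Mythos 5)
Your proof is correct, but it follows a genuinely different route from the paper's. You argue structurally: since the classes partition $\calD$, the gain/loss term of $U_A$ in Equation~\eqref{eqn:attacker_utility} and the cost $C_A^T$ in Equation~\eqref{eqn:attacker_cost_2} both split into per-class summands, each depending only on the block $\vec{a}_i$ (for the exploit-cost indicator this uses disjointness, as you rightly flag), and the feasible set factorizes as $\{0,1\}^{|\calD|} = \prod_{i}\{0,1\}^{|\calE_i|}$, so maximizing the block-separable sum reduces to maximizing each block; Lemma~\ref{lem:multi_dev_sing_class} characterizes exactly those block maximizers. The paper instead re-runs the Case~2 calculation globally: it fixes a candidate attacked set within each class via the thresholds $\overline{\tau}_\delta$, writes the cost and utility summed over all classes, and compares that aggregate utility against the zero utility of not attacking, ending with a single all-or-nothing attack condition over the whole device set. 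Your approach buys both brevity and rigor here. The paper's aggregate test couples the classes: if attacking class $\calE_1$ yields positive utility while attacking class $\calE_2$ yields negative utility, the aggregate can still be positive, in which case the paper's displayed best response attacks both classes, which is suboptimal; the genuine best response must decide each class independently. That per-class independence is precisely what Lemma~\ref{lem:multiple_device_multiple_class} asserts, and it is your separability argument, not the paper's aggregate comparison, that actually establishes it.
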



\begin{proposition} \label{prop:multiple_device_multiple_class}
For each device class $\mathcal{E}_i$, let $\vec{p}^*_i$ be an optimal attestation strategy as given by Proposition~\ref{prop:multi_dev_sing_class}.
Then, $\langle \vec{p}^*_1, \vec{p}^*_2, \ldots, \vec{p}^*_n\rangle$ is an optimal attestation strategy.
\end{proposition}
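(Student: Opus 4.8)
The plan is to exploit the fact that the two players' utilities \emph{separate} across device classes, so that both the attacker's best-response correspondence and the defender's Stackelberg objective decompose into independent per-class subproblems, each of which is exactly the Case~2 problem already solved by Proposition~\ref{prop:multi_dev_sing_class}. Writing $\vec{p}_i = \langle p_\delta^m \rangle_{\delta \in \calE_i,\, m \in \calM}$ and $\vec{a}_i = \langle a_\delta \rangle_{\delta \in \calE_i}$ for the restrictions of a strategy profile to class $\calE_i$, the first thing I would verify is that
\begin{align*}
U_D(\vec p, \vec a) = \sum_{i=1}^{n} U_D^{(i)}(\vec p_i, \vec a_i), \qquad U_A(\vec p, \vec a) = \sum_{i=1}^{n} U_A^{(i)}(\vec p_i, \vec a_i),
\end{align*}
where $U_D^{(i)}$ and $U_A^{(i)}$ are the per-class utilities obtained by restricting the sums in Equations~\eqref{eqn:defender_utility} and~\eqref{eqn:attacker_utility} to $\delta \in \calE_i$. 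This separation holds because the classes are disjoint and cover $\calD$, because $P_\delta(\vec p)$ in Equation~\eqref{eqn:accuracy} depends only on the components $p_\delta^m$ of the single device $\delta$, because the defender's cost in Equation~\eqref{eqn:defender_cost} is additive over devices, and---crucially---because the attacker's class-development term $C_A^{\calE_i}$ in Equation~\eqref{eqn:attacker_cost_2} couples only devices \emph{within} the same class.

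Given this separation, I would carry out the argument in two steps. First, since $U_A = \sum_i U_A^{(i)}$ and the variable $\vec a_i$ appears only in the $i$-th summand, the attacker can maximize each summand independently; hence the best-response correspondence factors as a Cartesian product $\calF_A(\vec p) = \prod_{i=1}^{n} \calF_A^{(i)}(\vec p_i)$, where $\calF_A^{(i)}$ is the single-class best response of Lemma~\ref{lem:multi_dev_sing_class}. This is precisely the content of Lemma~\ref{lem:multiple_device_multiple_class}, which I would invoke. Second, I would push this product structure through the defender's strong-Stackelberg optimization of Equation~\eqref{eqn:defendersoptimal}. Because the feasible set for $\vec a$ is the product set $\prod_i \calF_A^{(i)}(\vec p_i)$ and the objective $U_D = \sum_i U_D^{(i)}$ is a sum whose $i$-th term involves only $(\vec p_i, \vec a_i)$, maximizing a sum over a product set equals the sum of the per-factor maxima:
\begin{align*}
\max_{\vec p}\ \max_{\vec a \in \calF_A(\vec p)} U_D(\vec p, \vec a) = \sum_{i=1}^{n}\ \max_{\vec p_i}\ \max_{\vec a_i \in \calF_A^{(i)}(\vec p_i)} U_D^{(i)}(\vec p_i, \vec a_i).
\end{align*}
Each inner maximization is exactly the Case~2 optimization whose maximizer is the $\vec p^*_i$ furnished by Proposition~\ref{prop:multi_dev_sing_class}, so the concatenation $\langle \vec p^*_1, \ldots, \vec p^*_n \rangle$ attains the global maximum and is therefore optimal.

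The step I expect to require the most care is the interchange of maximization and summation in the displayed identity above: it is valid only because the constraint set on $\vec a$ is a \emph{product} over classes, and this in turn rests entirely on the separability of $U_A$ established in the first paragraph. I would also take care that the tie-breaking convention (ties broken in favor of the defender) is applied consistently at both the per-class and the global level, so that a per-class optimizer that is optimal under defender-favorable tie-breaking composes into a global optimizer under the same convention; since the defender's objective likewise separates, a defender-preferred best response within each class yields a defender-preferred best response overall, and no cross-class conflict can arise.
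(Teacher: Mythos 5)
Your proposal is correct, and it takes a genuinely cleaner route than the paper's own proof. The paper's appendix does not argue the decomposition abstractly: it simply re-runs the Case~2 computation with double sums $\sum_{\calE}\sum_{\delta \in \calE}$, splitting the defender's problem into two \emph{global} cases (completely deter the attacker vs.\ accept attacks), re-deriving the per-device threshold conditions, and posing deterrence as minimizing $\sum_{\calE}\sum_{\delta\in\calE} C_D^m \cdot p_\delta^m$ subject to the aggregate constraint $U_A(\vec p, \vec 1) \leq 0$. Your route instead isolates the structural facts --- $U_D$ and $U_A$ are sums of per-class terms, hence $\calF_A(\vec p)$ factors as a product of per-class best-response sets, hence the strong-Stackelberg value of a separable objective over a product set is the sum of per-class values --- and then invokes Proposition~\ref{prop:multi_dev_sing_class} once per class. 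This buys something substantive: the concatenated strategy $\langle \vec p^*_1, \ldots, \vec p^*_n \rangle$ may deter the adversary in some classes while accepting attacks in others, and your per-class decomposition handles such mixed profiles automatically, whereas the paper's global deter/not-deter dichotomy glosses over them (and its global constraint $U_A(\vec p, \vec 1) \leq 0$ is the wrong characterization of deterrence in the multi-class setting --- deterrence must hold class by class, which is exactly what your factorization delivers). What the paper's computational route buys in exchange is explicit closed-form conditions (the thresholds $\tau_\delta$ and the comparison of $C_D^m$ against $(G_D^\delta - L_D^\delta)\cdot\mu^m$) restated in multi-class notation, which your proof delegates to Proposition~\ref{prop:multi_dev_sing_class}. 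Your attention to tie-breaking is also warranted and absent from the paper: the observation that a defender-preferred selection from each per-class best-response set composes into a defender-preferred selection from the product is precisely what legitimizes the interchange of maximization and summation under the strong Stackelberg solution concept.
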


\section{Numerical Results}\label{sec:evaluation}
Here, we present experimental results from our testbed, which confirm our modeling assumptions, as well as numerical results on our game-theoretic model.

\subsection{Experimental Results from the Remote Attestation Testbed}

We consider an experimental setup with two test applications, an irrigation  and a smart\-home application.
We implement sequential and pseudo-random memory checksum as exemplary software-based remote attestation methods. 
%
Software-based attestation incurs various costs; in this section, we study checksum-based remote attestation  in terms of memory and computational overhead. 
We also evaluate checksum-based attestation in terms of detection rate.




\begin{figure}[t]
\centering
\pgfplotstableread[col sep=comma,]{Data/irrigation_server.csv}\dataA
\pgfplotstableread[col sep=comma,]{Data/smarthome_server.csv}\dataB
\pgfplotsset{scaled x ticks=false}
\begin{tikzpicture}

\begin{axis}[
    ymin=-5,
    ymax = 110,
    xmin=-5,
    xmax = 110,
	xlabel={Checksum Memory Coverage [\%]},
	ylabel={Detection Rate [\%]},
	width=0.55\textwidth,
    legend style={at={(0.01,.88)},anchor=west}]
    \addplot [color=red, mark=x] table [x=covered_memory_size, y=detection_rate, col sep=comma]{\dataA};
    \legend{irrigation}
    \addplot [color=blue, mark=x] table [x=covered_memory_size, y=detection_rate, col sep=comma]{\dataB};
    \addlegendentry{smarthome}
\end{axis}

\end{tikzpicture}
\caption{Detection rate of pseudo-random memory checksum as a function of memory coverage.}
\label{fig:detectionvsblock}
\end{figure}
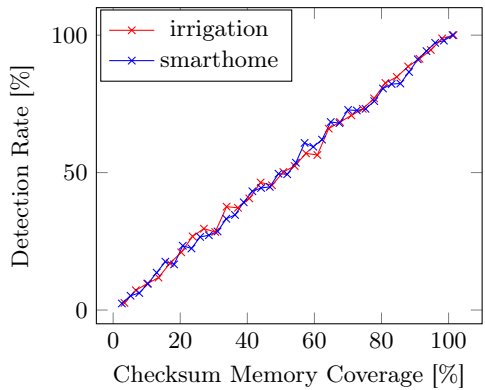

\begin{figure*}[!t]
\centering
\begin{subfigure}{.48\linewidth}
\centering
\includegraphics[width=\textwidth]{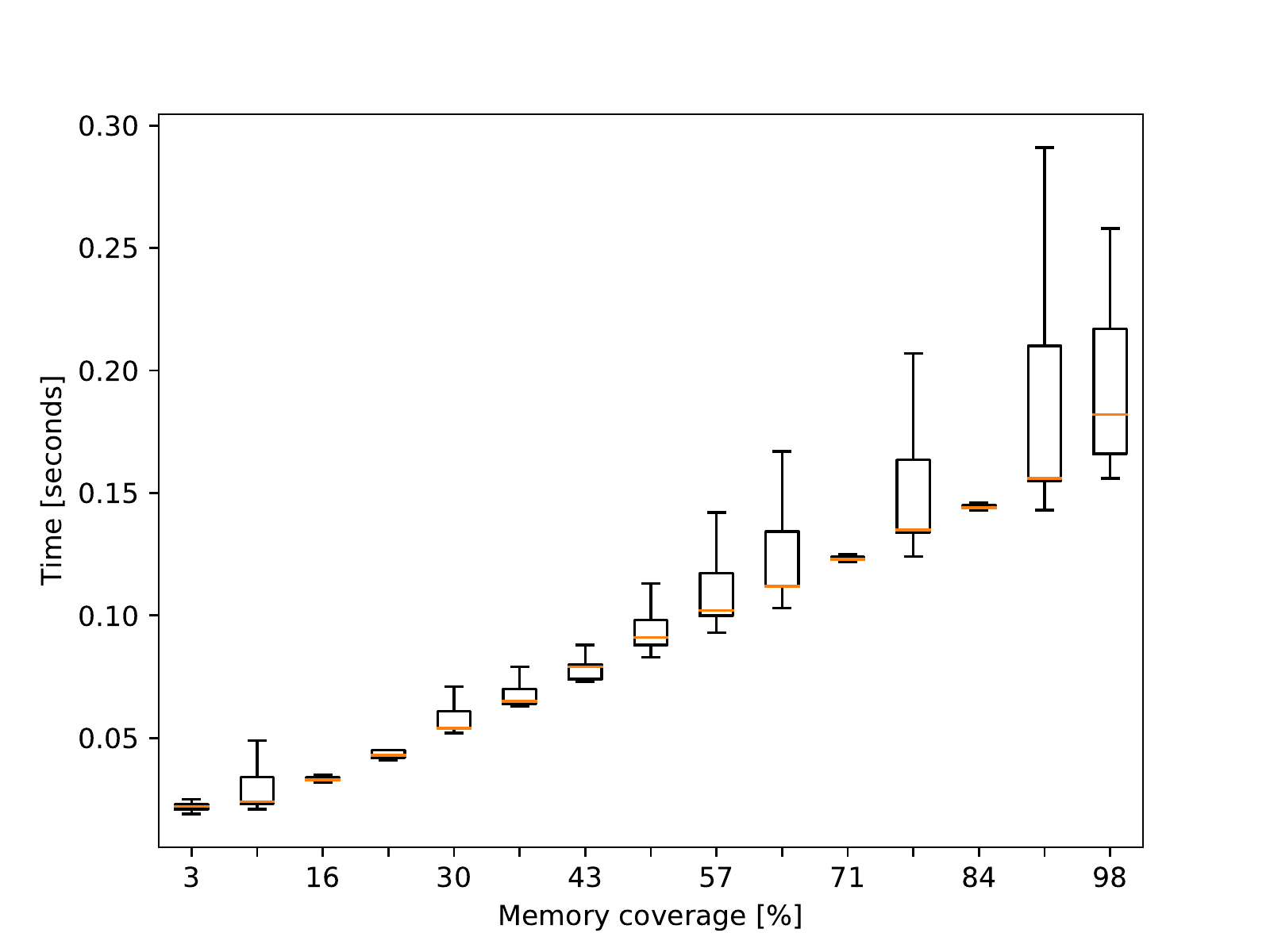}
\caption{irrigation }
\label{fig:irrigation_spent_time}
\end{subfigure} \hspace{.3em}
\begin{subfigure}{.48\linewidth}
\centering
    \includegraphics[width=\textwidth]{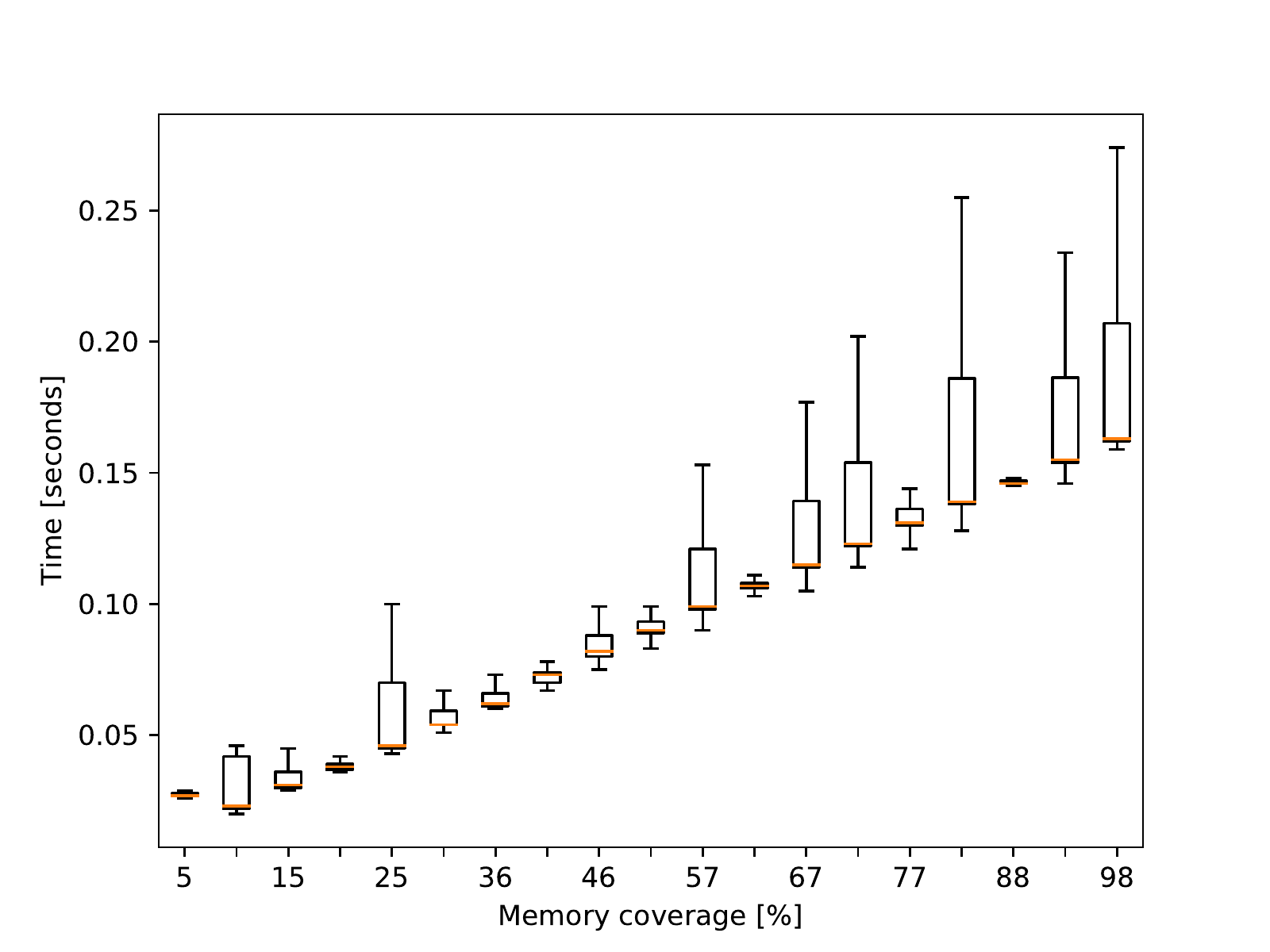}
    \caption{smarthome}
    \label{fig:smarthome_spent_time}
\end{subfigure}

\caption{Running time of checksum calculation as a function of memory coverage.}
\label{fig:iot_projects_comparison}
\end{figure*}

\paragraph{Detection Rate of Pseudo-random Memory Checksum}
In this experiment, we study the efficacy of pseudo-random memory checksum in terms of detecting changes to the code of a running application. We use a block-based pseudo-random technique, where each block is $500$ bytes. We start our experiment with checking $200$ blocks, which are selected pseudo-randomly based on a seed value. Then, we increase the number of blocks by $200$ in iterations to measure the impact of increasing memory coverage. In each iteration, we run  the pseudo-random memory checksum $500$ times, using a different random seed each time, to reliably measure the detection rate for a certain number of blocks.


Figure~\ref{fig:detectionvsblock} shows the
detection rate of pseudo-random memory checksum as a function of the fraction of memory covered by the checksum calculation, for our two test applications. 
Note that we express the fraction of memory covered as a percentage. Specifically, we calculate the ratio as $(\text{number of blocks} \times \text{block size} \times 100) / \text{total memory  size of the program}$.
We find that detection rate increases roughly proportionally with memory coverage, ranging from 0\% to 100\%, which supports our modeling choices. 



\paragraph{Running Time of Pseudo-Random Memory Checksum}
Next, we study the running time of calculating pseudo-random memory checksum with memory coverage ranging from 3\% to 98\%. 
For each memory-coverage level, we run the checksum calculation 500 times to obtain reliable running-time measurements.
Figure~\ref{fig:iot_projects_comparison} shows the distribution of running time for various memory-coverage level for our two test applications. 
We find that similar to detection rate, the average of running time also increases proportionally with memory coverage, which supports our modeling choices. 




\subsection{Evaluation of Game-Theoretic Model and Optimal Strategies}

To evaluate our model and optimal strategies, we 
consider an example environment consisting of $|\calD|=50$ IoT devices from $5$ different classes ($10$ devices in each class $\calE_i$), and for simplicity, we consider a single attestation method $m$ implemented in these devices.
For each device $\delta$, we choose both the defender's and the attacker's gain values  $G_D^\delta$ and $G_A^\delta$ uniformly at random from $[20, 40]$.
We assume that the game is zero-sum with respect to gains and losses; that is, we let the players' losses ($L_D^\delta, L_A^\delta$) be $G_D^\delta = -L_A^\delta$ and $L_D^\delta = -G_A^\delta$. 
Finally, we choose the detection probability of the attestation method $\mu$ uniformly at randomly from $[0.5, 0.9]$, the attestation cost $C_D$ from $[0, 10]$, the exploit development cost $C_A^\calE$ from $[15, 40]$, and the device attack costs $C_A^\delta$ from $[1, 3]$ for each device $\delta$. 

\begin{filecontents}{optimal_utilities.csv}
case, defender, attacker
p=0, -950,	767
p=1, -353,	0
p uniform, -385, 32
p optimal, -199, 0
\end{filecontents}

\begin{figure}[t]
\pgfplotstableread[col sep = comma]{optimal_utilities.csv}\table
\centering
\begin{tikzpicture}
    \begin{axis}[
            xbar,
            legend style={at={(.6,.91)},anchor=west},
            SmallBarPlot,
            width=0.7\textwidth,
            height=0.4\textwidth,
            xticklabels={$\vec{p}\!=\!\vec{0}$, $\vec{p}\!=\!\vec{1}$, uniform $\vec{p}$, \mbox{optimal $\vec{p}^*$}},
            ylabel=Utilities,
            enlarge x limits={abs=0.5},
            ymin=-1100,
        ]
        \addplot [BlueBars] table [x expr=\coordindex, y=defender] {\table};
        \addplot [RedBars] table [x expr=\coordindex, y=attacker] {\table};
    \end{axis}
    \end{tikzpicture}
    \caption{Comparison between optimal and na\"ive defender strategies based on the defender's utility \legendbox{blue} and the attacker's utility \legendbox{red}, assuming that the attacker chooses its best response.}
    \label{fig:optimal_utilities}
\end{figure}
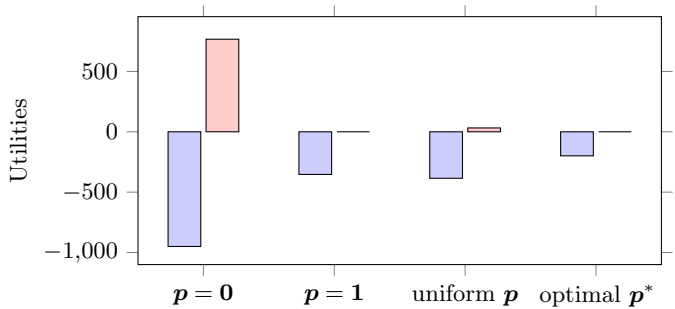

\paragraph{Comparison to Na\"ive Baselines}

We compare  the defender's optimal attestation strategy $\vec{p}^*$ to three na\"ive baseline strategies: $\vec{p}=\vec{0}$, $\vec{p}=\vec{1}$, and an optimal uniform $\vec{p}$ (i.e., same probability $p_\delta^m$ for all devices $\delta$, but this probability is chosen to maximize the defender's utility given that the adversary always chooses its best response). 
Figure~\ref{fig:optimal_utilities} shows the players' utilities for the optimal and na\"ive defender strategies, assuming that  the adversary chooses its best response in each case.
We see that the optimal strategy outperforms the na\"ive baselines in terms of the defender's utility.
Specifically, it outperforms $\vec{p} = \vec{0}$ and optimal uniform $\vec{p}$ by deterring the adversary from attacking, which these na\"ive baselines fail to achieve; and it outperforms $\vec{p} = \vec{1}$ by achieving deterrence at a lower cost.

\begin{filecontents}{attacker_attacks.csv}
case, defender, attacker
p=0, -950,	767
p=1, 203,	-742
p uniform, -385, 32
p optimal, -292, -94
\end{filecontents}

\begin{filecontents}{attacker_not_attack.csv}
case, defender, attacker
p=0, 0,	0
p=1, -353,	0
p uniform, -176, 0
p optimal, -199, 0
\end{filecontents}


\begin{figure*}[t]
\pgfplotstableread[col sep = comma]{attacker_attacks.csv}\table
\centering
\begin{subfigure}{.47\linewidth}
\centering
\begin{tikzpicture}
    \begin{axis}[
            xbar,
            legend style={at={(.6,.91)},anchor=west},
            SmallBarPlot,
            xticklabels={$\vec{p}\!=\!\vec{0}$, $\vec{p}\!=\!\vec{1}$, uniform $\vec{p}$, \mbox{optimal $\vec{p}^*$}},
            xticklabel style={rotate=24, font=\footnotesize},
            ylabel=Utilities,
            enlarge x limits={abs=0.5},
            ymin=-1100,
        ]
        \addplot [BlueBars] table [x expr=\coordindex, y=defender] {\table};
        \addplot [RedBars] table [x expr=\coordindex, y=attacker] {\table};
    \end{axis}
    \end{tikzpicture}
    \caption{Attacker chooses to attack}
    \label{fig:attestation_probability_comparison_a}
\end{subfigure} \hspace{1em}
\begin{subfigure}{.47\linewidth}
\pgfplotstableread[col sep = comma]{attacker_not_attack.csv}\table
\centering
    \begin{tikzpicture}
    \begin{axis}[
            xbar,
            legend style={at={(.6,.91)},anchor=west},
            SmallBarPlot,
            xticklabels={$\vec{p}\!=\!\vec{0}$, $\vec{p}\!=\!\vec{1}$, uniform $\vec{p}$, \mbox{optimal $\vec{p}^*$}},
            xticklabel style={rotate=24, font=\footnotesize},
            yticklabel pos=right,
            ylabel=Utilities,
            enlarge x limits={abs=0.5},
            ymin=-500,
        ]
        \addplot [BlueBars] table [x expr=\coordindex, y=defender] {\table};
        \addplot [RedBars] table [x expr=\coordindex, y=attacker] {\table};
    \end{axis}
    \end{tikzpicture}
    \caption{Attacker chooses not to attack}
    \label{fig:attestation_probability_comparison_b}
\end{subfigure}

\caption{Detailed comparison between optimal and na\"ive defender strategies based on the defender's utility \legendbox{blue} and the attacker's utility \legendbox{red}.}
\label{fig:attestation_probability_comparison}
\end{figure*}
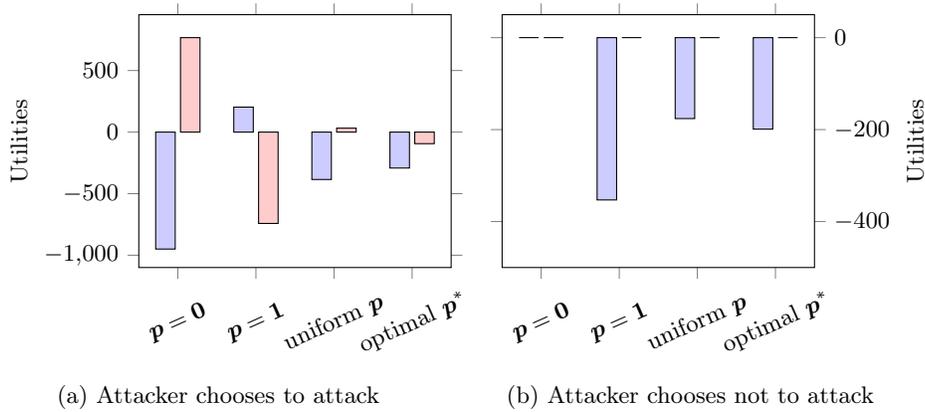

\paragraph{Detailed Comparison to Na\"ive Baselines}
Figure~\ref{fig:attestation_probability_comparison} provides a more detailed comparison between the optimal attestation strategy $\vec{p}^*$ and the three na\"ive baselines. 
In contrast to Figure~\ref{fig:optimal_utilities}, this figure shows utilities both in the case when the adversary decides to attack (Figure~\ref{fig:attestation_probability_comparison_a}) and in the case when it decides to not attack at all (Figure~\ref{fig:attestation_probability_comparison_b}).
In Figure~\ref{fig:attestation_probability_comparison_a}, we see that the adversary can obtain a positive utility from attacking against $\vec{p} = \vec{0}$ and the optimal uniform~$\vec{p}$. 
Therefore, these strategies do not deter the adversary from attacking. 
In contrast, the adversary's utility is negative against both $\vec{p} = \vec{1}$ and the optimal strategy $\vec{p}^*$.
In Figure~\ref{fig:attestation_probability_comparison_b}, we see that the defender incurs higher computational cost with $\vec{p} = \vec{1}$ than with the optimal strategy $\vec{p}^*$, making the latter the better choice.

\begin{filecontents*}{attacker_defender_game.csv}
case; defender; attacker
$\vec{p}^\text{ND}$ vs.\\ $\vec{a}=\vec{1}$; -995; 870
$\vec{p}^\text{ND}$ vs.\\ $\vec{a}=\vec{0}$; -41;	0
$\vec{p}^\text{D}$ vs.\\ $\vec{a}=\vec{1}$; -161; -3
$\vec{p}^\text{D}$ vs.\\ $\vec{a}=\vec{0}$; -67; 0
\end{filecontents*}

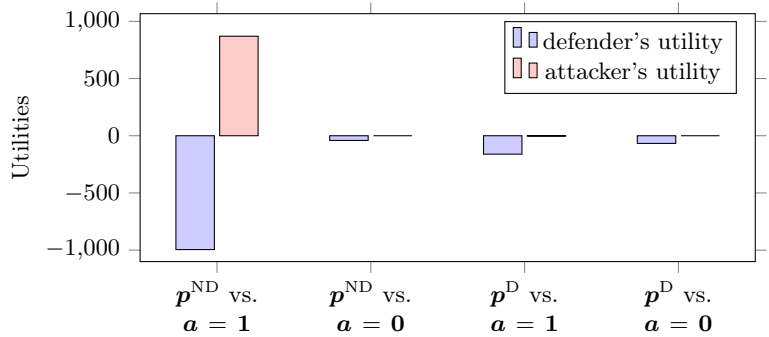
\begin{figure}[t]
\pgfplotstableread[col sep = semicolon]{attacker_defender_game.csv}\table
\centering
    \begin{tikzpicture}
    \begin{axis}[
            xbar,
            legend pos=north east,
            SmallBarPlot,
            width=0.8\textwidth,
            height=0.4\textwidth,
            xticklabels from table={\table}{case},
            x tick label style={
            },
            ylabel=Utilities,
            enlarge x limits={abs=0.5},
            ymin=-1100,
        ]
        \addplot [BlueBars] table [x expr=\coordindex, y=defender] {\table};
        \legend{defender's utility}
        \addplot [RedBars] table [x expr=\coordindex, y=attacker] {\table};
        \addlegendentry{attacker's utility}
    \end{axis}
    \end{tikzpicture}
\caption{Players' utilities in various strategy profiles: not deter vs. attack ($\vec{p}^\text{ND}$ vs. $\vec{a}=\vec{1}$),  not deter vs. not attack ($\vec{p}^\text{ND}$ vs. $\vec{a}=\vec{0}$),  deter vs. attack ($\vec{p}^\text{D}$ vs. $\vec{a}=\vec{1}$), deter vs. not attack ($\vec{p}^\text{D}$ vs. $\vec{a}=\vec{0}$).}
\label{fig:attacker_defender_game}
\end{figure}

\paragraph{Comparison of Strategy Profiles}
To better understand how the optimal attestation strategy outperforms the other strategies, we now take a closer look at the players' utilities in specific strategy profiles. 
For the defender, we consider two strategies:
optimal strategy given that the defender does not completely deter the attacker ($\vec{p}^\text{ND}$, see Proposition~\ref{prop:multi_dev_sing_class}) and optimal strategy that completely deters the attacker ($\vec{p}^\text{D}$, which is the optimal attestation strategy in this problem instance).
For the adversary, we also consider two strategies:
attacking every device ($\vec{a}=\vec{1}$, which is a best response against $\vec{p}^\text{ND}$ in this problem instance) and not attacking at all ($\vec{a}=\vec{0}$, which is always a best response against $\vec{p}^\text{D}$). 



Figure~\ref{fig:attacker_defender_game} shows the players' utilities in the four strategy profiles formed by the above strategies.
We observe that the defender's utility is highest when it does not completely deter the adversary from attacking and the adversary does not attack at all ($\vec{p}^\text{ND}$ vs. $\vec{a}=\vec{0}$) since the defender incurs minimal computational cost and suffers no security losses in this case. However, this is not an equilibrium since the adversary can attain higher utility by attacking every device (see $\vec{p}^\text{ND}$ vs. $\vec{a}=\vec{1}$), which results in the best utility for the adversary and worst for the defender.
To avoid such catastrophic losses, the defender can use the strategy of complete deterrence, in which case the adversary will be indifferent between attacking and not attacking ($\vec{p}^\text{D}$ vs. $\vec{a}=\vec{0}$ and $\vec{a}=\vec{1}$). 
Note that since the defender's utility is higher if the adversary does not attack, the defender can opt to tip the balance in favor of not attacking through an infinitesimal change.

\section{Related Work} \label{sec:related_work}

In this section, first we discuss the pros and cons of different IoT testbeds in the existing literature. Then we discuss existing works related to hardware- or software-based IoT remote attestation. Finally, we present a few SSG works and how our approach is different from theirs.




\subsection{IoT Security Testbeds}


General application- or hardware-oriented IoT testbeds are widely used for research works. The primary concern for these testbeds are to find ideal configurations or set up in different types of environments~\cite{belli2015design,adjih2015fit}. Several IoT security testbeds are available to test different security and reliability issues. For example,
Siboni et al.\ proposed an IoT security testbed framework, arguing that an ideal testbed should ensure reliability, anti-forensics, and adaptivity~\cite{siboni2019security}. In another work, 
Arseni et al.\ developed a heterogeneous IoT testbed named \emph{Pass-IoT}, consisting of three different architectures (MCU, SDSoC, traditional CPU) that can test, optimize, and develop lightweight cryptographic algorithms~\cite{arseni2016pass}.

Nowadays, the number of IoT applications is rising, and so are associated security concerns for these applications. Therefore,  
Tekeoglu et al.\ developed a security testbed that can perform privacy analysis of IoT devices, including HDMI sticks, IP cameras, smartwatches, and drones~\cite{tekeoglu2016testbed}. The testbed enables identifying insecure protocol versions, authentication issues, and privacy \mbox{violations}. 

We find the existing IoT security testbeds offering general security concerns related to cryptographic development, secure protocol implementations,  and data privacy issues. Our remote attestation testbed offers testing application vulnerabilities, developing associated exploits, and evaluating mitigation measures through software-oriented remote attestation.
Table~\ref{tab: comparison} presents a comparative analysis between our work and the existing IoT security testbeds.

\begin{table}[!ht]
\centering
\caption{IoT Security Testbeds}
\label{tab: comparison}
\resizebox{0.95\textwidth}{!}{
\begin{tabular}{|c|c|c|c|}
\hline
\diagbox[width=18em]{Features}{Works}     &
\begin{tabular}[c]{@{}c@{}}Arseni et\\ al., 2016\end{tabular} & \begin{tabular}[c]{@{}c@{}}Tekeoglu et\\ al., 2016\end{tabular} & Our Work \\ \hline\hline
\begin{tabular}[c]{@{}c@{}}Lightweight Encryption Algorithms Development \end{tabular} & \checkmark & \text{\sffamily X}  & \text{\sffamily X} \\\hline
\begin{tabular}[c]{@{}c@{}}Vulnerability Scans\end{tabular}  &  \text{\sffamily X} &  \checkmark & \checkmark \\ \hline
\begin{tabular}[c]{@{}c@{}}Authentication, Privacy Violations\end{tabular} & \text{\sffamily X} &  \checkmark &  \text{\sffamily X} \\ \hline
\begin{tabular}[c]{@{}c@{}}Exploitation Development and Analysis\end{tabular} & \text{\sffamily X} &  \text{\sffamily X} &  \checkmark \\ \hline
\begin{tabular}[c]{@{}c@{}}Mitigation Measures Testing\end{tabular} & \text{\sffamily X} &  \text{\sffamily X} &  \checkmark  \\ \hline
\begin{tabular}[c]{@{}c@{}}Remote Attestation Experiments\end{tabular} & \text{\sffamily X} &  \text{\sffamily X} &  \checkmark  \\ \hline
\end{tabular}
}
\end{table}

\subsection{Remote Attestation}

Checksum-based remote attestation has been widely used to secure IoT devices for a long time. Earlier, Seshadri et al.\ proposed a cell-based pseudo-random traversal approach in their software-based attestation scheme entitled \emph{SWATT}~\cite{seshadri2004swatt}. The authors developed the 8-bit micro-controller architecture scheme to generate random addresses to checksum using an RC4 stream cipher. Yang et al.\ proposed a distributed software-based attestation scheme for WSN to verify the integrity of code in a distributed fashion~\cite{yang2007distributed}. The works led to later works in terms of cell- and block-based pseudo-random checksum, respectively.

A few recent works include hardware-assisted remote runtime attestation~\cite{geden2019hardware} that addresses runtime attack detection, a low-cost checksum-based remote memory attestation for smart grid~\cite{yang2015towards}, lightweight remote attestation in distributed wireless sensor networks where all nodes validate each other's data~\cite{kiyomoto2014lightweight}, and so on. Survey papers on attestation, for example, Steiner et al. presented a more comprehensive overview on checksum-based attestation~\cite{steiner2016attestation}.

\subsection{Stackelberg Security Games}
SSGs have been successfully applied to security problems in resource-limited domains such as airport security, biodiversity protection, randomized inspections, border patrols, cyber security, and so on~\cite{trejo2016adapting,gan2018stackelberg,bucarey2017building}. 
To the best of our knowledge, our work is the first to apply game theory in the area of remote attestation. 

While there is no prior work within the intersection of game theory and remote attestation, a number of research efforts have applied SSGs to other detection problems that resemble ours.
%
%
For example, Wahab et al. developed a Bayesian Stackelberg game that helps the defender to determine optimal detection load distribution strategy among virtual machines within a cloud environment~\cite{wahab2019resource}. 
As another example, Chen et al. develops an SSG model that detects adversarial outbreak in an IoT environment through determining strategic dynamic scheduling of intrusion detection systems~\cite{chen2020stackelberg}.
\section{Conclusion and Future Work} \label{sec:conclusion}

\vspace{-0.25em}

IoT device exploitation has been a significant issue lately, and organizations are investing significant resources and effort into managing these security risks.
An important approach for mitigating this threat is {remote attestation}, 
    which enables the defender to remotely verify the integrity of devices and their software. 
In this work, we developed a testbed that offers research opportunities to explore and analyze IoT vulnerabilities and exploitation and to conduct experiments with varous remote attestation methods.

So far, we have developed attack strategies mostly for when kernel and compiler-based security measures are disabled. In future work, we plan to include exploitation with security features enabled in the resource-limited IoT environment. Additionally, in this paper, we evaluated software-based attestation methods (sequential and random memory-based checksum). We intend to include some other variants of attestation methods (e.g., hybrid checksum) in the testbed and to conduct experiments with control-flow integrity.

Further, we have showed how to optimize remote-attestation strategies by formulating and studying a Stackelberg security game model. 
Our analytical results provide algorithmic solutions for finding optimal attestation strategies in a variety of settings.
These results can provide guidance to practitioners on how to  protect IoT devices using remote attestation in resource-limited environments. 

In this work we discussed optimal strategies for one attestation method ($|\calM|=1$). In future, we plan to provide analytical solutions for more general cases of multiple devices ($|\calD|>1$), multiple classes ($|\calE|>1$), and multiple attestation methods ($|\calM|>1$) as well.
Additionally, we intend to refine our model and find optimal strategies in a more complex environment using machine learning algorithms (e.g., reinforcement learning).

\subsubsection*{Acknowledgments}
This material is based upon work supported by the National Science Foundation under Grant No. CNS-1850510, IIS-1905558, and ECCS-2020289 and by the Army Research Office under Grant No. W911NF1910241 and W911NF1810208.


\bibliographystyle{splncs04}
\bibliography{reference}

\ifExtendedVersion
\appendix 
\section{Appendix} \label{sec:proofs}

In this appendix, we present the step-by-step analysis of our Stackelberg security game between the attacker and the defender. Here, we show how the defender can take advantage of being the first to move, and how the attacker its plays best response to the defender's move.

\subsection{Case 1: Single Device and Single Attestation Method}

\subsubsection{Attacker's Best Response:}

\begin{proof}[Proof of Lemma~\ref{lem:single_device_single_class}]
Considering an individual device to attack or not, we can derive the attacker's expected utility from Eqn.~\ref{eqn:attacker_utility}. Now, from Eqn.~\ref{eqn:attacker_utility}, we see the loss $L_A^\delta$, and the gain $G_A^\delta$
are constants. Here, to maximize the expected utility, the attacker should choose the best response based on the following
\begin{align}
    &\operatorname{argmax}_{\vec a}U_A(\vec p, \vec a)\notag\\ 
    &= \operatorname{argmax}_{\vec a} \Big(\big[ L_A^\delta \cdot P_\delta(p_\delta^m) + G_A^\delta \cdot (1-P_\delta(p_\delta^m)) \big] \cdot a_\delta - C^T_A(\vec a)\Big ) . \label{eqn: attacker_max_utility}
\end{align}

If the attacker chooses to attack a single device from a particular class $\calE$, attacker needs to develop exploits for that class. Therefore, from Eqn.~\ref{eqn:attacker_cost_2} we get
\begin{align}
    C_A^T (\vec a) = C_A^{\calE} + C_A^\delta \label{eqn:attacker_case_single_device_1}
\end{align}
where $\calE \subset \calD, \delta \in \calE, |\calE| = 1$.

Now, if the attacker chooses to attack a particular device, i.e., $a_\delta=1$, the expected utility is 
\begin{align}
    U_A(\vec p, 1) = \big[ L_A^\delta \cdot P_\delta(p_\delta^m) + G_A^\delta \cdot (1-P_\delta(p_\delta^m)) \big]   - (C_A^{\calE} + C_A^\delta) .
\end{align}

In contrast, if the attacker chooses not to attack and does not develop an exploit, its expected utility is
\begin{align}
    U_A(\vec p, 0) = 0 .
\end{align}

We assume that the attacker only attacks when it expects that the utility from attacking will be greater than the utility of not attacking:
\begin{align}
    U_A(\vec p, 1) &> U_A(\vec p, 0) \notag \\
      L_A^\delta \cdot P_\delta(p_\delta^m) + G_A^\delta \cdot (1-P_\delta(p_\delta^m))  - \big (C_A^{\calE} + C_A^\delta \big ) &> 0 \notag\\
      L_A^\delta \cdot P_\delta(p_\delta^m) + G_A^\delta \cdot (1-P_\delta(p_\delta^m))  &> C_A^{\calE} + C_A^\delta \notag\\
     L_A^\delta \cdot P_\delta(p_\delta^m) - G_A^\delta\cdot P_\delta(p_\delta^m) &> C_A^{\calE} + C_A^\delta - G_A^\delta \notag\\
     \big( L_A^\delta - G_A^\delta \big) \cdot P_\delta(p_\delta^m) &> C_A^{\calE} + C_A^\delta - G_A^\delta \notag\\
     P_\delta(p_\delta^m) &< \frac{C_A^{\calE} + C_A^\delta - G_A^\delta}{L_A^\delta - G_A^\delta} .
\end{align}
Note that $L_A^\delta - G_A^\delta$ is negative by definition.

Now, from Eqn.~\ref{eqn:accuracy}, for a particular attestation method, we get
\begin{align}
    1 - (1 - \mu^m \cdot p_\delta^m) &< \frac{C_A^{\calE} + C_A^\delta - G_A^\delta}{L_A^\delta - G_A^\delta} \notag\\
     \mu^m \cdot p_\delta^m &< \frac{C_A^{\calE} + C_A^\delta - G_A^\delta}{L_A^\delta - G_A^\delta}\notag\\
     p_\delta^m &< \frac{1}{\mu^m} \cdot \frac{C_A^{\calE} + C_A^\delta - G_A^\delta}{L_A^\delta - G_A^\delta} .
\end{align}

Hence, whether the attacker chooses to attack or not depends on the following conditions
\begin{align}
    \operatorname{argmax}_{\vec a}U_A(\vec p, \vec a) = \begin{cases}
    1 & \text{ if $p_\delta^m < \frac{1}{\mu^m} \cdot \frac{C_A^{\calE} + C_A^\delta - G_A^\delta}{L_A^\delta - G_A^\delta}$} \\
    0 & \text{ otherwise.}
    \end{cases} \label{eqn:final_single_device_single_attestation_condition_appendix}
\end{align}

From Eqn.~\ref{eqn:final_single_device_single_attestation_condition_appendix}, we see the attacker chooses to attack or not based on the threshold value $\tau_\delta = \frac{1}{\mu^m} \cdot \frac{C_A^{\calE} + C_A^\delta - G_A^\delta}{L_A^\delta - G_A^\delta}$. Now, for $p_\delta^m = \tau_\delta$, the attacker's utility is not optimal regardless of whether the attacker chooses to attack or not. 
\end{proof}


\subsubsection{Defender's Optimal Strategy:}
\begin{proof}[Proof of Proposition~\ref{prop:single_device_single_class}]
The defender knows the optimal strategy of an attacker and chooses the value of the probability of attestation $p_\delta^m$ accordingly. 
First, we calculate the cost for the defender for a single attestation method:
\begin{align}
    C_D^T (\vec p) &= C_D^m \cdot p_\delta^m && \delta \in \calD, m \in \calM
\end{align}

Now, the defender knows that the attacker's strategy is either attack or no-attack. Therefore, the defender calculates utilities for these two conditions.

\paragraph{Defender chooses not to deter the attacker:}
If the attacker chooses to attack, from Eqn.~\ref{eqn:defender_utility}, the defender's utility is
\begin{align}
    U_D(\vec p,1)
    &=  G^\delta_D \cdot P_\delta(p_\delta^m) + L^\delta_D \cdot (1-P_\delta(p_\delta^m)) - C_D^m \cdot p_\delta^m\notag\\
    &= G^\delta_D \cdot p_\delta^m \cdot \mu^m + L^\delta_D \cdot (1-p_\delta^m \cdot \mu^m ) - C_D^m \cdot p_\delta^m \notag\\
    &= p_\delta^m \cdot (G^\delta_D \cdot \mu^m - L^\delta_D \cdot \mu^m - C_D^m) + L^\delta_D 
    \label{eqn:defender_utility_single_device_attack}
\end{align}
where $p_\delta^m = [0,\tau_\delta)$.

If $C_D^m < (G^\delta_D  - L^\delta_D)\cdot \mu^m$, the defender chooses $p_\delta^m = \tau_\delta$ for the maximum utility:
\begin{align}
    \max_{\vec p, \vec a \in \calF_A(\vec p)} U_D(\vec p,1) = \tau_\delta \cdot (G^\delta_D \cdot \mu^m - L^\delta_D \cdot \mu^m - C_D^m) + L^\delta_D .
    \label{eqn:defender_utility_tau_not_zero}
\end{align}

Otherwise, the defender chooses a $p_\delta^m = 0$ for the maximum utility:
\begin{align}
    \max_{\vec p, \vec a \in \calF_A(\vec p)} U_D(\vec p,1) = L_D^\delta .
    \label{eqn:defender_utility_tau_zero}
\end{align}

\paragraph{Defender chooses to deter the attacker:}
If the attacker chooses not to attack, the defender's utility is
\begin{align}
    U_D(\vec p,0) = 0 - C_D^m \cdot p_\delta^m \label{eqn:defender_utility_single_device_no_attack}
\end{align}
where $p_\delta^m = [\tau_\delta,1]$.

Here, $\tau_\delta < 1$ and Eqn.~\ref{eqn:defender_utility_single_device_no_attack} attains its maximum for $p_\delta^m = \tau_\delta$. Therefore, the defender's maximum utility is
\begin{align}
    \max_{\vec p, \vec a \in \calF_A(\vec p)} U_D(\vec p,0) = - C_D^m \cdot \tau_\delta .
       \label{eqn:attacker_choose_not_to_attack}
\end{align}

\paragraph{Utility Comparison:}
The defender does not have an attestation cost if there is no attack. So, the maximum utility for the defender is greater if the attacker chooses not to attack than an attack.

\begin{align}
   \max_{\vec p, \vec a \in \calF_A(\vec p)} U_D(\vec p,0) &> \max_{\vec p, \vec a \in \calF_A(\vec p)} U_D(\vec p,1) .
   \label{eqn:utility_comparison}
\end{align}

Now, if the defender chooses $p_\delta^m = \tau_\delta$, then we get from Eqn.~\ref{eqn:defender_utility_tau_not_zero} and Eqn.~\ref{eqn:attacker_choose_not_to_attack} that
\begin{align}
     - C_D^m \cdot \tau_\delta &> \tau_\delta \cdot (G^\delta_D \cdot \mu^m - L^\delta_D \cdot \mu^m - C_D^m) + L^\delta_D \notag\\
     \tau_\delta \cdot \mu^m \cdot (L^\delta_D - G^\delta_D) &> L_D^\delta \notag\\
     \tau_\delta &< \frac{L_D^\delta}{\mu^m \cdot (L^\delta_D - G^\delta_D)}  .
\end{align}

On the other hand, if the defender chooses $p_\delta^m = 0$, then we get from Eqn.~\ref{eqn:defender_utility_tau_zero} and Eqn.~\ref{eqn:attacker_choose_not_to_attack} that
\begin{align}
     - C_D^m \cdot \tau_\delta &> L^\delta_D \notag\\
     \tau_\delta &< \frac{L_D^\delta}{- C_D^m}  .
\end{align}

Hence, the probability of attestation chosen by the defender depends on the following conditions.
%
If $C_D^m < (G^\delta_D  - L^\delta_D)\cdot \mu^m$, the defender's choice is
\begin{align}
    \operatorname{argmax}_{\vec p,\, \vec a \in \calF_A(\vec p)} U_D(\vec p, \vec a)
    = \tau_\delta 
\end{align}
Otherwise, if $C_D^m \geq (G^\delta_D  - L^\delta_D)\cdot \mu^m$, the defender's choice is
\begin{align}
    \operatorname{argmax}_{\vec p,\, \vec a \in \calF_A(\vec p)} U_D(\vec p, \vec a)
    = \begin{cases} \tau_\delta & \text{if } \tau_\delta < \frac{L_D^\delta}{- C_D^m}\\
    0 & \text{otherwise.} \end{cases}
\end{align}
\end{proof}

\subsection{Case 2: Multiple Devices and Single Device Class}

\subsubsection{Attacker's Best Response:}
\begin{proof}[Proof of Lemma~\ref{lem:multi_dev_sing_class}]
If the attacker chooses to attack multiple devices from a particular class $\calE$, attacker needs to develop exploits only once for the class. Now, while choosing devices to attack, the attacker has two cases: to attack a device $\delta \in \calE$, attacker has to develop exploit for that class, and to attack additional devices from the same class the attacker does not have additional cost for developing exploits.

Now, if there is no additional cost to develop exploits, to attack additional device $\delta \in \calD$ from the same class is 
\begin{align}
    C_A^T (1) = C_A^\delta .
\end{align}

Therefore, from Eqn.~\ref{eqn:attacker_case_single_device_1} the total cost is
\begin{align}
    C_A^T (\vec a) 
    &= C_A^{\calE} + \sum_{\delta \in \calE} C_A^\delta \cdot a_\delta \label{eqn:single_class_cost_att}
\end{align}
where $\calE \subset \calD, |\calE| > 1$.

\paragraph{Case 1: Attacker chooses to attack some devices:}
From Eqn.~\ref{eqn:final_single_device_single_attestation_condition_appendix}, the attacker can choose to attack a device from the class if $p_\delta^m < \frac{1}{\mu^m} \cdot \frac{C_A^{\calE} + C_A^\delta - G_A^\delta}{L_A^\delta - G_A^\delta}$. 
As the attacker has no cost for developing exploits for the rest of the devices, attacker can choose to attack other devices if $p_\delta^m < \frac{1}{\mu^m} \cdot \frac{C_A^\delta - G_A^\delta}{L_A^\delta - G_A^\delta}$ for $\delta \in \calE$, and hence, $\tau_\delta = \frac{1}{\mu^m} \cdot \frac{C_A^\delta - G_A^\delta}{L_A^\delta - G_A^\delta}$. Here, the set of the attacker's chosen devices from a single class is $\calA$, where 

\begin{align}
    \calA = \left\{\delta \in \calE ~\middle|~ p_\delta < \frac{1}{\mu^m} \cdot \frac{C_A^\delta - G_A^\delta}{L_A^\delta - G_A^\delta} \right\} . \label{eqn:chosen_class_A_definition} 
\end{align}

Now, if the attacker chooses to attack the chosen devices $\delta \in \calA$, from Eqn.~\ref{eqn:single_class_cost_att}, the total cost is

\begin{align}
    C_A^T (\vec a) 
    &= C_A^{\calE} + \sum_{\delta \in \calA} C_A^\delta && \text{where }\, a_\delta
    = \begin{cases} 1 & \delta \in \calA\\
    0 & \text{otherwise.} \end{cases} \label{eqn:attacker_case_multiple_device}
\end{align}

Hence, the expected utility is for attacking the chosen devices is

\begin{align}
    U_A(\vec p, \vec a) = \sum_{\delta \in \calA} \big [ L_A^\delta \cdot P_\delta(p_\delta^m) + G_A^\delta \cdot (1-P_\delta(p_\delta^m)) \big ] - C_A^{\calE}  - \sum_{\delta \in \calA} C_A^\delta\\
    \text{where }\, a_\delta
    = \begin{cases} 1 & \delta \in \calA\\
    0 & \text{otherwise.} \end{cases} \notag
\end{align}

\paragraph{Case 2: Attacker chooses not to attack any device:}
If the attacker chooses not to attack any device, the total cost is 
\begin{align}
    C_A^T (\vec 0) = 0 && \text{where $\delta \in \calD$ and $a_\delta = 0$.}
\end{align}

Now, the expected utility of the attacker is

\begin{align}
    U_A(\vec p, \vec 0) = 0. && \text{where $\delta \in \calD$ and $a_\delta = 0$.}
\end{align}

\paragraph{Utility Comparison:}
Consider the expected utility of attacking some devices is greater than the expected utility if the attacker chooses not to attack:
\begin{align}
    U_A(\vec p, \vec a) > U_A(\vec p, \vec 0) \notag \\
    \text{where $\delta \in \calA ( a_\delta = 1 )$} \notag
\end{align}

Hence,
\begin{align}
     \sum_{\delta \in \calA} L_A^\delta \cdot P_\delta(p_\delta^m) + G_A^\delta \cdot (1-P_\delta(p_\delta^m))   - C_A^{\calE} - \sum_{\delta \in \calA} C_A^\delta > 0 \notag\\
      \sum_{\delta \in \calA} \big [ L_A^\delta \cdot P_\delta(p_\delta^m) \big ] + \sum_{\delta \in \calA} \big [ G_A^\delta \cdot (1-P_\delta(p_\delta^m)) \big ]  > C_A^{\calE} + \sum_{\delta \in \calA} C_A^\delta \notag\\
     \sum_{\delta \in \calA} \big [ L_A^\delta \cdot P_\delta(p_\delta^m) \big ] - \sum_{\delta \in \calA} \big [ G_A^\delta\cdot P_\delta(p_\delta^m) \big ]  > C_A^{\calE} + \sum_{\delta \in \calA} C_A^\delta - \sum_{\delta \in \calA}  G_A^\delta \notag\\
     \sum_{\delta \in \calA} \big [ \big( L_A^\delta - G_A^\delta \big) \cdot P_\delta(p_\delta^m)  \big ]
    > C_A^{\calE} + \sum_{\delta \in \calA} \big( C_A^\delta -  G_A^\delta \big) . \label{eqn: attacker_multiple_device_single_class_utility_comparison}
\end{align}

Now, from Eqn.~\ref{eqn:accuracy}, for a single attestation method, $P_\delta(p_\delta^m) = p_\delta^m \cdot \mu^m$. Therefore, we derive from Eqn.~\ref{eqn: attacker_multiple_device_single_class_utility_comparison} as follows 


\begin{align}
    \sum_{\delta \in \calA} \big [ \big( L_A^\delta - G_A^\delta \big) \cdot p_\delta^m \cdot \mu^m  \big ]
    > C_A^{\calE} + \sum_{\delta \in \calA} \big( C_A^\delta -  G_A^\delta \big) \notag\\
     \mu^m \cdot \sum_{\delta \in \calA} \big [ \big( L_A^\delta - G_A^\delta \big) \cdot p_\delta^m \big ]
    > C_A^{\calE} + \sum_{\delta \in \calA} \big( C_A^\delta -  G_A^\delta \big) . \label{eqn:attacker_multi_device_single_attest}
\end{align}

Now, for each selected device $\delta \in \calA$, the best response of the attacker $\vec a \in \calF_A(\vec p)$ depends on the following conditions 
\begin{align}
    a_\delta = \begin{cases}
    1 & \text{ if Eqn.~\ref{eqn:attacker_multi_device_single_attest} is true and } \delta \in \mathcal{A} \\
    0 & \text{ otherwise.}
    \end{cases}  \label{eqn:multiple_device_single_attestation_condition}
\end{align}

Given the defender's attestation strategy $\vec{p}$, let 
\begin{equation}
{a}^*_\delta = \begin{cases} 1 & \text{ if } p_\delta^m < \overline{\tau}_\delta \\
0 & \text{ otherwise,}
\end{cases}
\end{equation}
where
\begin{equation}
  \overline{\tau}_\delta =  \frac{1}{\mu^m} \cdot \frac{C_A^\delta - G_A^\delta}{L_A^\delta - G_A^\delta} 
\end{equation}
If $U_A\left(\vec{p}, \vec{a}^*\right) \geq 0$, then $\vec{a}^*$ is a best-response strategy for the attacker; otherwise, the only best-response strategy is $\vec{a} = \vec{0}$.

\end{proof}

\subsubsection{Defender's Optimal Strategy:}
\begin{proof}[Proof of Proposition~\ref{prop:multi_dev_sing_class}]
The defender knows that the attacker's strategy is either to attack some devices $\delta \in \calA$ from a particular class or not to attack at all. Therefore, the defender calculates optimal utilities for the following two conditions.

\paragraph{Defender chooses not to deter the attacker:}

The attacker selects the set of chosen devices $\calA$ based on the defender's deployed probability of attestation $p_\delta^m$ value. 
Here, the defender's optimal expected utility for the devices $\delta \in \calE$ is
\begin{align*}
    &\max_{\vec p, \vec a \in \calF_A(\vec p)}\ U_D(\vec p, \vec a) = \sum_{\delta \in \calE} \left ( \big [ G^\delta_D \cdot P_\delta(p_\delta^m) + L^\delta_D \cdot (1-P_\delta(p_\delta^m)) \big ] \cdot a_\delta - C_D^m \cdot p_\delta^m \right )
\end{align*}
where $p_\delta^m \in [0,\tau_\delta]$.

Now, the defender needs to calculate utility for individual devices as the attacker's decision entirely depends on the combined expected utilities from the chosen devices. Here, the attacker chooses to attack devices if the probability of attestation is less than the device threshold value, i.e., $p_\delta^m < \tau_\delta = \frac{1}{\mu^m} \cdot \frac{C_A^\delta - G_A^\delta}{L_A^\delta - G_A^\delta}$. Now, if the attacker chooses to attack, the defender's expected utility $U_{D}^\delta(p_\delta^m, 1)$ for a device $\delta \in \calE$ is
\begin{align}
    U_{D}^\delta(p_\delta^m, 1)
    &=  G^\delta_D \cdot P_\delta(p_\delta^m) + L^\delta_D \cdot (1-P_\delta(p_\delta^m)) -  C_D^m \cdot p_\delta^m\notag\\
    &=  G^\delta_D \cdot p_\delta^m \cdot \mu^m +  L^\delta_D \cdot (1-p_\delta^m \cdot \mu^m ) -  C_D^m \cdot p_\delta^m \notag\\
    &=  \big[ p_\delta^m \cdot (G^\delta_D \cdot \mu^m - L^\delta_D \cdot \mu^m - C_D^m)\big] +  L^\delta_D 
\end{align}
where $\delta \in \calE, p_\delta^m \in [0,\tau_\delta]$.

Here, the defender can reduce the probability of attestation if the cost of attestation is greater than the defender's loss.
Therefore, the defender's optimal choice is the following.

If $C_D^m < (G^\delta_D  - L^\delta_D)\cdot \mu^m$, then the defender's choice is
\begin{align}
    \operatorname{argmax}_{\vec p,\, \vec a \in \calF_A(\vec p)} U_D(\vec p, \vec a)
    = \tau_\delta .
\end{align}

Otherwise, if $C_D^m \geq (G^\delta_D  - L^\delta_D)\cdot \mu^m$, the defender's choice is
\begin{align}
    \operatorname{argmax}_{\vec p,\, \vec a \in \calF_A(\vec p)} U_D(\vec p, \vec a)
    = \begin{cases} \tau_\delta & \text{if } \tau_\delta < \frac{L_D^\delta}{- C_D^m}\\
    0 & \text{otherwise}. \end{cases}
\end{align}

\paragraph{Defender chooses to deter the attacker:}

Here, the defender's cost increases with the increase of the probability of attestation $p_\delta^m$. However, the defender can choose $p_\delta^m = \tau_\delta$ for all the devices to deter the attacker from attacking. Now, from the defender's perspective, not each device is equally important for the attacker or the cost of attestation is greater than the defender's loss. Therefore, the defender can choose $p_\delta^m < \tau_\delta$ for those devices.
\begin{align}
    \max_{\vec p:~ \forall \vec a \left(U_A(\vec p, \vec a) \leq 0\right)}  U_D(\vec p, \vec a) 
    &= -\sum_{\delta \in \calE} C_D^m \cdot p_\delta^m
\end{align}
where $p_\delta^m \in [0,\tau_\delta]$.

Since $p_\delta^m \in [0,\tau_\delta]$, if the attacker chooses to attack,  attacking every device is optimal.
Hence, we can consider $\vec a = \vec 1$:
\begin{align}
    \max_{\vec p}  U_D(\vec p, \vec 1) 
    &= -\sum_{\delta \in \calE} C_D^m \cdot p_\delta^m\\
    \text{such that }&\,\, U_A(\vec p, \vec 1) \leq 0 \nonumber \\
    \text{where }&\,\,  p_\delta^m \in [0,\tau_\delta] .\notag
\end{align}

\end{proof}

\subsection{Case 3: Multiple Devices and Multiple Device Classes}

\subsubsection{Attacker's Best Response}

\begin{proof}[Proof of Lemma~\ref{lem:multiple_device_multiple_class}]
If the attacker chooses to attack multiple devices from multiple classes, attacker needs to develop exploits for target classes. Now, the attacker has two choices: either to attack a set of multiple devices from multiple classes or not attack at all.

\paragraph{Case 1: Attacker chooses to attack some devices:}
If the chosen set of devices is $\calA$ (from Eqn.~\ref{eqn:chosen_class_A_definition}) from each class $\calE$
the attacker's total cost is
\begin{align}
    C_A^T (\vec a) &= \sum_{\calE} \big( C_A^{\calE} + \sum_{\delta \in \calA} C_A^\delta \big ) && \text{where $\delta \in \calA, \calA \subset \calE$}\label{eqn:attacker_case_multiple_device_multiple_class}\\
    \text{and }&\, a_\delta
    = \begin{cases} 1 & \delta \in \calA\\
    0 & \text{otherwise.} \end{cases} \notag
\end{align}

Now, if the attacker chooses to attack devices $\delta \in \calA$ from each class $\calE$, the expected utility is 
\begin{align}
    U_A(\vec p, \vec a) = \sum_{\calE} \sum_{\delta \in \calA}& \big[ L_A^\delta \cdot P_\delta(p_\delta^m) + G_A^\delta \cdot (1-P_\delta(p_\delta^m)) \big]  - \sum_{\calE} \big( C_A^{\calE} + \sum_{\delta \in \calA} C_A^\delta \big ) \\
    \text{where }&\, a_\delta
    = \begin{cases} 1 & \delta \in \calA\\
    0 & \text{otherwise.} \end{cases} \notag
\end{align}

\paragraph{Case 2: Attacker chooses not to attack any device:}
If the attacker chooses not to attack any device, i.e., $a_\delta = 0$, the total cost is
\begin{align}
    C_A^T (\vec 0) = 0 && \text{where $\delta \in \calD$ and $a_\delta = 0$,}
\end{align}
and thus the expected utility of the attacker is
\begin{align}
    U_A(\vec p, \vec 0) = 0 && \text{where $\delta \in \calD$ and $a_\delta = 0$.}
\end{align}

\paragraph{Utility Comparison:}
The expected utility of attacking a set of devices from multiple classes is greater than the expected utility if the attacker chooses not to attack if

\begin{align}
    U_A(\vec p, \vec a) > U_A(\vec p, \vec 0) \notag \\
    \text{where $\delta \in \calA ( a_\delta = 1 )$,}\, \calA \subset \calE \notag
\end{align}

Hence,
\begin{align}
     \sum_{\calE} \sum_{\delta \in \calA} \big[ L_A^\delta \cdot P_\delta(p_\delta^m) + G_A^\delta \cdot (1-P_\delta(p_\delta^m)) \big]   - \sum_{\calE} \big( C_A^{\calE} + \sum_{\delta \in \calE} C_A^\delta \big ) > 0 \notag\\
      \sum_{\calE}\sum_{\delta \in \calA} \big [ L_A^\delta \cdot P_\delta(p_\delta^m) \big ] + \sum_{\calE}\sum_{\delta \in \calA} \big [ G_A^\delta \cdot (1-P_\delta(p_\delta^m)) \big ]  > \sum_{\calE} \big( C_A^{\calE} + \sum_{\delta \in \calA} C_A^\delta \big ) \notag\\
     \sum_{\calE}\sum_{\delta \in \calA} \big [ L_A^\delta \cdot P_\delta(p_\delta^m) \big ] - \sum_{\calE}\sum_{\delta \in \calA} \big [ G_A^\delta\cdot P_\delta(p_\delta^m) \big ]  > \sum_{\calE} \big( C_A^{\calE} + \sum_{\delta \in \calA} C_A^\delta \big ) - \sum_{\calE}\sum_{\delta \in \calA}G_A^\delta \notag\\
     \sum_{\calE}\sum_{\delta \in \calA} \big [ \big( L_A^\delta - G_A^\delta \big) \cdot P_\delta(p_\delta^m) \big ] 
    > \sum_{\calE} \big( C_A^{\calE} + \sum_{\delta \in \calA} C_A^\delta \big ) - \sum_{\calE}\sum_{\delta \in \calA}G_A^\delta \notag\\
     \sum_{\calE}\sum_{\delta \in \calA} \big [ \big( L_A^\delta - G_A^\delta \big) \cdot P_\delta(p_\delta^m) \big ] > \sum_{\calE} C_A^{\calE} - \sum_{\calE}\sum_{\delta \in \calA} \big(C_A^\delta - G_A^\delta \big ) . \label{eqn: attacker_multiple_device_multiple_classes_utility_comparison}
\end{align}

Now, from Eqn.~\ref{eqn:accuracy}, for a single attestation method, $P_\delta(p_\delta^m) = p_\delta^m \cdot \mu^m$. Therefore, we derive from Eqn.~\ref{eqn: attacker_multiple_device_multiple_classes_utility_comparison} as follows
\begin{align}
    \sum_{\calE}\sum_{\delta \in \calA} \big [ \big( L_A^\delta - G_A^\delta \big) \cdot p_\delta^m \cdot \mu^m \big ] > \sum_{\calE} C_A^{\calE} - \sum_{\calE}\sum_{\delta \in \calA} \big(C_A^\delta - G_A^\delta \big ) \notag\\
     \mu^m \cdot \sum_{\calE}\sum_{\delta \in \calA} \big [ \big( L_A^\delta - G_A^\delta \big) \cdot p_\delta^m \big ] > \sum_{\calE} C_A^{\calE} - \sum_{\calE}\sum_{\delta \in \calA} \big(C_A^\delta - G_A^\delta \big ) . \label{eqn:attacker_multi_device_multi_attest}
\end{align}

Now, for each selected device $\delta \in \calA, \calA \subset \calE$, the best response of the attacker $\vec a \in \calF_A(\vec p)$ depends on the following conditions
\begin{align}
    a_\delta = \begin{cases}
    1 & \text{ if Eqn.~\ref{eqn:attacker_multi_device_multi_attest} is true and $\delta \in \calA, \calA \subset \calE$} \\
    0 & \text{ otherwise.}
    \end{cases}
\end{align}


\end{proof}

\subsubsection{Defender's Optimal Strategy:}
\begin{proof}[Proof of Proposition~\ref{prop:multiple_device_multiple_class}]
The defender knows that the attacker's strategy is either to attack some devices $\delta \in \calA$ from a particular class or not to attack at all. Therefore, the defender calculates optimal utilities for the following two conditions.

\paragraph{Defender chooses not to deter the attacker:}

The attacker selects the set of chosen devices $\calA$ based on the defender's deployed probability of attestation $p_\delta^m$ value. 
Here, the defender's optimal expected utility for the devices $\delta \in \calE$ is
\begin{align}
    &\max_{\vec p, \vec a \in \calF_A(\vec p)}\ U_D(\vec p, \vec a) \notag\\
    &= \sum_{\calE}\sum_{\delta \in \calE} \left ( \big [ G^\delta_D \cdot P_\delta(p_\delta^m) + L^\delta_D \cdot (1-P_\delta(p_\delta^m)) \big ] \cdot a_\delta - C_D^m \cdot p_\delta^m \right )    
\end{align}
where $\delta \in \calE, \calE \subset \calD,  p_\delta^m \in [0,\tau_\delta)$.

Now, the defender needs to calculate utility for individual devices as the attacker's decision entirely depends on the combined expected utilities from the chosen devices. Here, the attacker chooses to attack devices if the probability of attestation is less than the device threshold value, i.e., $p_\delta^m < \tau_\delta$. Now, if the attacker chooses to attack, the defender's expected utility $U_{D}^\delta(p_\delta^m, 1)$ of a device $\delta \in \calE$ is
\begin{align}
    U_{D}^\delta(p_\delta^m, 1)
    &=  G^\delta_D \cdot P_\delta(p_\delta^m) + L^\delta_D \cdot (1-P_\delta(p_\delta^m)) -  C_D^m \cdot p_\delta^m\notag\\
    &=  G^\delta_D \cdot p_\delta^m \cdot \mu^m +  L^\delta_D \cdot (1-p_\delta^m \cdot \mu^m ) -  C_D^m \cdot p_\delta^m \notag\\
    &=  \big[ p_\delta^m \cdot (G^\delta_D \cdot \mu^m - L^\delta_D \cdot \mu^m - C_D^m)\big] +  L^\delta_D 
\end{align}
where $\delta \in \calE, \calE \subset \calD, p_\delta^m \in [0,\tau_\delta)$.

Here, the defender can reduce the probability of attestation if the cost of attestation is greater than the defender's loss.
Therefore, the defender's optimal choice is the following.

If $C_D^m < (G^\delta_D  - L^\delta_D)\cdot \mu^m$, the defender's choice is
\begin{align}
    \operatorname{argmax}_{\vec p,\, \vec a \in \calF_A(\vec p)} U_D(\vec p, \vec a)
    = \tau_\delta .
\end{align}

Otherwise, if $C_D^m \geq (G^\delta_D  - L^\delta_D)\cdot \mu^m$, the defender's choice is
\begin{align}
    \operatorname{argmax}_{\vec p,\, \vec a \in \calF_A(\vec p)} U_D(\vec p, \vec a)
    = \begin{cases} \tau_\delta & \text{if } \tau_\delta < \frac{L_D^\delta}{- C_D^m}\\
    0 & \text{otherwise.} \end{cases}
\end{align}



\paragraph{Defender chooses to deter the attacker:}

Here, the defender's cost increases with the increase of the probability of attestation $p_\delta^m$. However, the defender can choose $p_\delta^m = \tau_\delta$ for all the devices to deter the attacker from attacking. However, not each device is equally important for the attacker or the cost of attestation is greater than the defender's loss. Therefore, the defender can choose $p_\delta^m < \tau_\delta$ for those devices:
\begin{align}
    \max_{\vec p, \vec a \in \calF_A(\vec p)}  U_D(\vec p, \vec a) 
    &= -\sum_{\calE}\sum_{\delta \in \calE} C_D^m \cdot p_\delta^m\\
    \text{where }&\,\, \delta \in \calE, \calE \subset \calD,  p_\delta^m \in [0,\tau_\delta] . \notag
\end{align}

Since $p_\delta^m \in [0,\tau_\delta]$, if the attacker chooses to attack, then attacking every device is optimal.
Hence, we can consider $\vec a = \vec 1$
\begin{align}
    \max_{\vec p}  U_D(\vec p, \vec 1) 
    &= -\sum_\calE\sum_{\delta \in \calE} C_D^m \cdot p_\delta^m\\
    \text{such that }&\,\, U_A(\vec p, \vec 1) \leq 0 \nonumber \\
    \text{where }&\,\,  p_\delta^m \in [0,\tau_\delta] . \notag
\end{align}

\end{proof}
\fi

\end{document}